\documentclass[12pt]{article}
\usepackage[utf8]{inputenc}
\usepackage{bm}
\usepackage{amsfonts}
\usepackage{amsmath}
\usepackage{amssymb}
\usepackage{amsthm}
\usepackage{bbm}
\usepackage{graphicx}
\usepackage{pdflscape}
\usepackage{tabularx}
\usepackage{array}
\usepackage{booktabs}
\usepackage[font=bf,justification=centering]{caption}
\usepackage{enumitem}
\usepackage[top=1.25in, bottom=1.25in, left=1.25in, right=1.25in]{geometry}
\usepackage{setspace}
\usepackage{siunitx}
\usepackage{titling}
\usepackage{natbib}
\usepackage[toc,page]{appendix}
\usepackage[hidelinks,bookmarks]{hyperref}

\usepackage{bibunits}

\defaultbibliography{bibliography}        
\defaultbibliographystyle{ecta}

\newtheorem{algorithm}{Algorithm}
\theoremstyle{definition}
\newtheorem{assumption}{Assumption}
\theoremstyle{definition}

\theoremstyle{plain}

\theoremstyle{plain}

\newtheorem{proposition}{Proposition}

\hypersetup{pdfstartview={XYZ null null 1.00}} 
\allowdisplaybreaks

\newcolumntype{Y}{>{\centering\arraybackslash}X}

\title{Fast Posterior Sampling in Tightly Identified SVARs Using `Soft' Sign Restrictions\thanks{We thank Joshua Chan, Thorsten Drautzburg, Toru Kitagawa, James Morley, Yong Song, Benjamin Wong, three anonymous referees and the Co-Editor, James Hamilton, for valuable comments. We also thank participants at the 2023 Continuing Education in Macroeconometrics workshop, the 2023 Time Series and Forecasting Symposium, the 2024 Econometric Society Australasia Meeting and the 2025 World Congress of the Econometric Society. The views expressed in this paper are those of the authors and should not be attributed to the Reserve Bank of Australia. Any errors are the sole responsibility of the authors.}}
\author{Matthew Read\thanks{Economic Research Department, Reserve Bank of Australia. email: readm@rba.gov.au}\\
Dan Zhu\thanks{Department of Econometrics and Business Statistics, Monash University. email: dan.zhu@monash.edu.au}}
\date{\today}

\begin{document}

\begin{bibunit}

\maketitle

\begin{abstract}
	We propose algorithms for conducting Bayesian inference in structural vector autoregressions identified using sign restrictions. The key feature of our approach is a sampling step based on `soft' sign restrictions. This step draws from a target density that smoothly penalises parameter values that violate the restrictions, facilitating the use of computationally efficient Markov chain Monte Carlo sampling algorithms. An importance-sampling step yields draws conditional on the `hard' sign restrictions. Relative to standard accept-reject sampling, the method substantially speeds up sampling when identification is tight. It also facilitates implementing prior-robust Bayesian methods. We illustrate the broad applicability of the approach in an oil-market model identified using a rich set of sign, elasticity and narrative restrictions.
\end{abstract}

\noindent\textbf{JEL Classification Numbers:} C32, Q35, Q43 \\
\noindent\textbf{Keywords:} Bayesian inference, Markov Chain Monte Carlo, sign restrictions, structural vector autoregression.

\section{Introduction}
\label{sec:introduction}
Sign-restricted structural vector autoregressions (SVARs) are used extensively to estimate the effects of macroeconomic shocks. Researchers using sign restrictions typically work with the SVAR's `orthogonal reduced-form' parameterisation and conduct Bayesian inference under a uniform (Haar) prior on the orthonormal (rotation) matrix, $\mathbf{Q}$.\footnote{See the references in \citet{Baumeister_Hamilton_2018} for many examples. There is debate about the use of this prior. In set-identified models, a component of the prior is never updated (\citealt{Poirier_1998}). This can result in posterior inference being asymptotically sensitive to the choice of prior and Bayesian credible sets lying strictly within frequentist confidence sets (\citealt{Moon_Schorfheide_2012,Baumeister_Hamilton_2024}). These concerns tend to be negligible when identification is tight, which is the setting we focus on.} Posterior sampling in this setting is almost universally implemented using accept-reject sampling (e.g. \citealt{Uhlig_2005,Rubio-Ramirez_Waggoner_Zha_2010,Arias_Rubio-Ramirez_Waggoner_2018}).

Sign restrictions generate an identified set for $\mathbf{Q}$, which is the set of observationally equivalent parameter values. Accept-reject sampling can be computationally demanding when the identified set has small measure -- that is, when identification is tight. For example, in a model from \citet{Kilian_Murphy_2014}, \citet{Baumeister_Hamilton_2024} report that only 16 out of 5~million draws satisfy the identifying restrictions. In such cases, accurately approximating the posterior can require a very large number of proposal draws. This bottleneck is likely to intensify as applied work moves towards richer sets of identifying restrictions (e.g. \citealt{Inoue_Kilian_2026}).\footnote{Rich sets of identifying restrictions can mitigate the problem of `masquerading' shocks, where linear combinations of shocks satisfy the sign restrictions (\citealt{Wolf_2020,Wolf_2022}).}

This paper develops an approach to sampling $\mathbf{Q}$ that addresses the bottleneck associated with accept-reject sampling. Its key feature is an initial sampling step based on `soft' sign restrictions. The approach involves: 1) specifying a target distribution that smoothly penalises parameter values violating the sign restrictions; 2) using a Markov chain Monte Carlo (MCMC) algorithm -- the slice sampler (e.g. \citealt{Neal_2003}) -- to draw from this target distribution; and 3) importance sampling to obtain uniform draws of $\mathbf{Q}$ conditional on the (hard) sign restrictions. We emphasise that we do not conduct inference under a relaxed set of restrictions; the soft sign restrictions are a computational device to facilitate sampling under the hard restrictions.\footnote{For approaches to conducting Bayesian inference under non-dogmatic sign restrictions, see \citet{Baumeister_Hamilton_2018,Baumeister_Hamilton_2019}.}

Our approach delivers efficiency gains for two reasons. First, we sample from a smooth density that tends to assign higher probability to the identified set and smaller but nonzero probability outside the identified set. This means the sampler, when initialised outside the identified set, will tend to drift from its initial point towards the identified set. Second, once the sampler has moved within the identified set, it tends to remain there. By initially sampling from a computationally convenient approximation to the target density and then using importance sampling to obtain draws from the hard-truncated distribution, our approach is analogous to tempering, which is often used in Sequential Monte Carlo (e.g. \citealt{Herbst_Schorfheide_2015}).

Existing approaches to uniform sampling of $\mathbf{Q}$ under sign restrictions can in some cases be more efficient than accept-reject sampling, but are only applicable under certain classes of restrictions. \citet{Amir-Ahmadi_Drautzburg_2021} develop a Gibbs sampler that draws from a uniform distribution over the identified set for $\mathbf{Q}$. However, the sampler is not applicable when the sign restrictions constrain all columns of $\mathbf{Q}$ (e.g. when there are restrictions on the impulse responses to all shocks) or when each restriction does not linearly constrain a single column of $\mathbf{Q}$.\footnote{\citet{Read_2022} extends this sampler to allow for zero restrictions.}

\citet{Chan_Matthes_Yu_2025} exploit invariance of the uniform distribution for $\mathbf{Q}$ to column permutations and sign flips to efficiently generate draws that satisfy sign and ranking restrictions on impact responses. Their method can be combined with an accept-reject step to impose additional restrictions, though this will be costly if these restrictions substantially tighten identification. The gains from their method are largest in high-dimensional SVARs, as there are many possible permutations, while the gains are smaller in the lower-dimensional SVARs that we focus on.\footnote{\citet{Hou_2024} proposes an MCMC algorithm for posterior sampling under linear equality and inequality restrictions on impact impulse responses, with additional inequality restrictions implementable via an accept-reject step. The posterior corresponds to an independent Gaussian prior over the columns of the impact impulse-response matrix, rather than the uniform prior for $\mathbf{Q}$.}

In contrast with these existing approaches, our sampler can be directly applied when the identifying restrictions are nonlinear and/or constrain all columns of $\mathbf{Q}$, and can thus handle a wide variety of identifying restrictions, provided these take the form of inequalities. These include restrictions on: impulse responses (e.g. \citealt{Uhlig_2005}); structural coefficients (e.g. \citealt{Arias_Caldara_Rubio-Ramirez_2019}); ratios of impulse responses and elasticities (e.g. \citealt{Kilian_Murphy_2012,Baumeister_Hamilton_2024}); and forecast error variance decompositions (e.g. \citealt{Volipcella_2021}). They are also applicable under: shape or ranking restrictions (e.g. \citealt{Amir-Ahmadi_Drautzburg_2021}); narrative restrictions (e.g. \citealt{Ludvigson_Ma_Ng_2018,Ludvigson_Ma_Ng_2021,Antolin-Diaz_Rubio-Ramirez_2018}); and restrictions on the relationship between proxies and structural shocks (e.g. \citealt{Arias_Rubio-Ramirez_Waggoner_2021,Giacomini_Kitagawa_Read_2022b,Braun_Bruggemann_2022}).

In contemporaneous work, \cite{Arias_Rubio-Ramirez_Shin_2026} develop a Gibbs sampler to sample from the posterior of the orthogonal reduced-form parameters, focusing on SVARs with a large number of variables. Their sampler requires iteratively sampling from conditional densities that are truncated by the sign restrictions using the elliptical slice sampler from \cite{Murray_Adams_MacKay_2010}. Soft sign restrictions could in principle be exploited to more efficiently sample from these conditional densities when the sign restrictions substantially truncate their support.

Accept-reject sampling also causes bottlenecks when using prior-robust Bayesian methods to assess or eliminate prior sensitivity (e.g. \citealt{Giacomini_Kitagawa_2021,Giacomini_Kitagawa_Read_2026}). Implementing these methods requires calculating bounds of identified sets at every reduced-form parameter draw. This can be achieved by repeatedly drawing $\mathbf{Q}$ from inside its identified set and computing the minimum and maximum of the parameter of interest (e.g. an impulse response) over the draws. A large number of draws may be required to accurately approximate identified sets (e.g. \citealt{Montiel-Olea_Nesbit_2021}). Our sampler facilitates applying prior-robust Bayesian methods, particularly when identification is tight.

The idea of `softening' restrictions has been used in other contexts. In the context of sampling, \citet{Souris_Bhattacharya_Pati_2019} develop a Gibbs sampler to draw from a smooth approximation of a truncated multivariate normal distribution subject to linear constraints. Our proposal also involves sampling from a smooth approximation of a truncated normal distribution, but the general set of constraints that we consider may be nonlinear. In operations research, kernel smoothing methods have been used to compute derivatives of simulated outcomes when sample paths are discontinuous (e.g. \citealt{Liu_Hong_2011,Bruins_etal_2018}).

We demonstrate the utility of our approach in two main settings. First, we use a bivariate model to illustrate the mechanics of the method and compare its efficiency with accept-reject sampling. Our approach performs particularly well when the identified set for $\mathbf{Q}$ has small measure under the uniform prior, which occurs when identification is tight. We also show that the approach continues to sample effectively even when the identified set consists of disconnected regions.

Second, we revisit the oil-market model in \citet{Antolin-Diaz_Rubio-Ramirez_2018}, which builds on \citet{Kilian_2009} and \citet{Kilian_Murphy_2012}.\footnote{Closely related variants of this model are considered in several other papers (e.g. \citealt{Baumeister_Peersman_2013,Lutkepohl_Netsunajev_2014,Baumeister_Hamilton_2019,Bacchiocchi_etal_2024,Carriero_Marcellino_Tornese_2024,Hoesch_Lee_Mesters_2024}).} This model imposes a rich set of sign, elasticity and narrative restrictions, which nonlinearly constrain all columns of $\mathbf{Q}$. Our sampler is roughly an order of magnitude more efficient than accept-reject sampling in this application. We also demonstrate the utility of our approach in conducting prior-robust Bayesian inference, which would be extremely computationally burdensome when implemented via accept-reject sampling. We find that inferences about the effects of shocks in the oil market obtained under this rich set of identifying restrictions are broadly robust to the choice of conditional prior for $\mathbf{Q}$. Importantly, this apparent robustness is unlikely to be an artifact of numerical approximation error given the large number of draws used to approximate identified sets. We briefly outline an additional empirical application -- a model of US monetary policy from \citet{Antolin-Diaz_Rubio-Ramirez_2018} -- which indicates that the favourable performance of our sampler persists in a larger model.

The remainder of the paper is structured as follows. Section~\ref{sec:framework} describes the SVAR, the identifying restrictions and the sampling problem. Section~\ref{sec:algorithms} describes accept-reject sampling and introduces our sampler based on soft sign restrictions. Section~\ref{sec:numericalexamples} illustrates our sampler in a simple example. Section~\ref{sec:empirical} examines the performance of the sampler empirically. Section~\ref{sec:conclusion} concludes. Appendix~\ref{app:proofs} contains proofs. A Supplemental Appendix contains additional material. 

\medskip

\textbf{Notation.} $\mathbf{e}_{i,n}$ is the $i$th column of the $n\times n$ identity matrix, $\mathbf{I}_{n}$. $\mathbf{0}_{n\times m}$ is an $n\times m$ matrix of zeros. For a $n\times m$ matrix $\mathbf{X}$, $\mathrm{vec}(\mathbf{X})$ is the vectorisation operator, which stacks the elements of $\mathbf{X}$ into an $nm\times 1$ vector. If $\mathbf{X}$ is $n\times n$, $\mathrm{vech}(\mathbf{X})$ is the half-vectorisation, which stacks the elements lying on or below the diagonal into an $n(n+1)/2\times 1$ vector. $\mathbbm{1}(.)$ is the indicator function.

\section{Framework}
\label{sec:framework}
This section describes the SVAR and its orthogonal reduced-form parameterisation, the identifying restrictions considered and the sampling problem.

\subsection{SVAR and orthogonal reduced form}
\label{subsec:SVAR}
Let $\mathbf{y}_t=\left(y_{1t},\ldots,y_{nt}\right)'$ be an $n\times 1$ vector and consider the SVAR($p$):
\begin{equation}
	\mathbf{A}_0 \mathbf{y}_t = \mathbf{A}_+ \mathbf{x}_t + \bm{\varepsilon}_t,
\end{equation} 
where $\mathbf{A}_0$ is an invertible $n\times n$ matrix and $\mathbf{x}_t = \left(\mathbf{y}_{t-1}',\ldots,\mathbf{y}_{t-p}'\right)'$.\footnote{In our illustrations and empirical exercises, we normalise the diagonal elements of $\mathbf{A}_0$ to be nonnegative, but this is not a requirement of our methods.} Conditional on past information, $\bm{\varepsilon}_{t} \sim N(\mathbf{0}_{n\times 1},\mathbf{I}_{n})$. The orthogonal reduced form is
\begin{equation}
	\mathbf{y}_t = \mathbf{B}\mathbf{x}_t + \bm{\Sigma}_{tr}\mathbf{Q}\bm{\varepsilon}_t,
\end{equation}
where: $\mathbf{B} = \left(\mathbf{B}_1,\ldots,\mathbf{B}_p\right)=\mathbf{A}_0^{-1}\mathbf{A}_+$ are the reduced-form coefficients; $\bm{\Sigma}_{tr}$ is the lower-triangular Cholesky factor of the variance-covariance matrix of the reduced-form VAR innovations, $\bm{\Sigma}=\mathbb{E}\left(\mathbf{u}_t\mathbf{u}_t'\right) = \mathbf{A}_{0}^{-1}\left(\mathbf{A}_{0}^{-1}\right)'$ with $\mathbf{u}_{t} = \mathbf{y}_{t} - \mathbf{B}\mathbf{x}_t$; and $\mathbf{Q} \in \mathcal{O}(n)$ is an $n\times n$ orthonormal matrix with $\mathcal{O}(n)$ the space of all such matrices. Denote the reduced-form parameters by $\bm{\phi} = \left(\mathrm{vec}\left(\mathbf{B}\right)',\mathrm{vech}(\bm{\Sigma}_{tr})'\right)' \in \bm{\Phi}$.

Let $\mathbf{C}_h$ be defined recursively by $\mathbf{C}_h=\sum_{l=1}^{\mathrm{min}\{h,p\}}\mathbf{B}_l\mathbf{C}_{h-l}$ for $h\geq 1$ with $\mathbf{C}_0 = \mathbf{I}_n$. Element $(i,j)$ of $\mathbf{C}_h\bm{\Sigma}_{tr}\mathbf{Q}$ is the horizon-$h$ impulse response of variable $i$ to structural shock $j$, denoted by $\eta_{ijh}(\bm{\phi},\mathbf{Q})=\mathbf{c}_{ih}'(\bm{\phi})\mathbf{q}_j$, where $\mathbf{c}_{ih}'(\bm{\phi}) = \mathbf{e}_{i,n}'\mathbf{C}_{h}\bm{\Sigma}_{tr}$ is row $i$ of $\mathbf{C}_{h}\bm{\Sigma}_{tr}$ and $\mathbf{q}_{j} = \mathbf{Q}\mathbf{e}_{j,n}$ is column $j$ of $\mathbf{Q}$.

\subsection{Identifying restrictions}
\label{subsec:identification}
Identifying restrictions on functions of the structural parameters are equivalent to restrictions on $\mathbf{Q}$. We allow for a wide variety of inequality restrictions (which we refer to as `sign restrictions'), including restrictions on:
\begin{itemize}
	\item \textbf{Impulse responses.} The restriction $\eta_{ijh}(\bm{\phi},\mathbf{Q}) \geq 0$ is equivalent to  $\mathbf{c}_{ih}'(\bm{\phi})\mathbf{q}_{j}\geq 0$. We also allow for ranking restrictions on impulse responses (e.g. Amir-Ahmadi and Drautzburg 2021\nocite{Amir-Ahmadi_Drautzburg_2021}). An example is that $\eta_{ijh}(\bm{\phi},\mathbf{Q}) \geq \eta_{ijl}(\bm{\phi},\mathbf{Q})$ for $l\neq h$, which is equivalent to $(\mathbf{c}_{ih}'(\bm{\phi})-\mathbf{c}_{il}'(\bm{\phi}))\mathbf{q}_{j} \geq 0$.
    
	\item \textbf{Structural coefficients.} A restriction on the matrix of contemporaneous structural coefficients is $\mathbf{e}_{j,n}'\mathbf{A}_{0}\mathbf{e}_{i,n} \geq 0$, which is equivalent to $(\bm{\Sigma}_{tr}^{-1}\mathbf{e}_{i,n})'\mathbf{q}_{j} \geq 0$. We can also consider restrictions on $\mathbf{A}_{+} = \mathbf{Q}'\bm{\Sigma}_{tr}^{-1}\mathbf{B}$.
    
	\item \textbf{Bounds on (relative) impulse responses and elasticities.} \citet{Kilian_Murphy_2012} impose bounds on ratios of impulse responses; for example, $\eta_{i10}(\bm{\phi},\mathbf{Q})/\eta_{110}(\bm{\phi},\mathbf{Q}) \geq \lambda$, where $\lambda$ is a known scalar. We can similarly allow for bounds on ratios of elements of $\mathbf{A}_{0}$, which can be interpreted as structural elasticities (Baumeister and Hamilton 2024\nocite{Baumeister_Hamilton_2024}).
    
	\item \textbf{Narrative restrictions.} Narrative restrictions are restrictions on structural shocks in specific episodes (\citealt{Antolin-Diaz_Rubio-Ramirez_2018,Ludvigson_Ma_Ng_2018,Ludvigson_Ma_Ng_2021}). For example, the restriction that shock $j$ in period $k$ was nonnegative is $\varepsilon_{jk} = (\bm{\Sigma}_{tr}^{-1}\mathbf{u}_{k})'\mathbf{q}_{j} \geq 0$. The restriction on the historical decomposition that shock $j$ was the `most important contributor' to the observed unexpected change in variable $i$ between periods $k$ and $k+h$ is $|H_{i,j,k,k+h}|\geq \max_{l\neq j}|H_{i,l,k,k+h}|$, where $H_{i,j,k,k+h} = \sum_{l=0}^{h}\mathbf{c}_{il}'(\bm{\phi})\mathbf{q}_{j}\mathbf{q}_{j}'\bm{\Sigma}_{tr}^{-1}\mathbf{u}_{k+h-l}$.
    
	\item \textbf{Other restrictions.} We allow for other types of inequality restrictions, including on forecast error variance decompositions (e.g. \citealt{Volipcella_2021}), and relationships between proxy variables and structural shocks (e.g. \citealt{Ludvigson_Ma_Ng_2018,Ludvigson_Ma_Ng_2021,Arias_Rubio-Ramirez_Waggoner_2021,Giacomini_Kitagawa_Read_2022b,Braun_Bruggemann_2022}).\footnote{We do not allow for exogeneity restrictions related to proxy variables, which are equivalent to zero restrictions. Examples of inequality restrictions involving proxy variables include the restriction that a proxy is positively correlated with a shock or that the contribution of a shock to the variance of a proxy is greater than the contributions of other shocks (e.g. \citealt{Braun_Bruggemann_2022}).} These restrictions can all be cast as inequality restrictions on $\mathbf{Q}$.
\end{itemize}

Let $S(\bm{\phi},\mathbf{Q}) \geq \mathbf{0}_{s\times 1}$ represent a generic collection of $s$ sign restrictions.\footnote{Some restrictions depend on parameters or objects not captured in the definition of $\bm{\phi}$. For instance, narrative restrictions depend on the reduced-form innovations in specific periods. We leave this potential dependence implicit.} Given the sign restrictions, the identified set for $\mathbf{Q}$ is
\begin{equation}
	\mathcal{Q}(\bm{\phi} \mid S) = \left\{\mathbf{Q} \in \mathcal{O}(n): S(\bm{\phi},\mathbf{Q})\geq\mathbf{0}_{s\times 1}\right\}.
\end{equation}
The identified set $\mathcal{Q}(\bm{\phi} \mid S)$ is the set of orthonormal matrices that are consistent with the joint distribution of the data (summarised by $\bm{\phi}$) and the imposed identifying restrictions. Given $\bm{\phi}$, the identified set consists of observationally equivalent values of $\mathbf{Q}$, which correspond to the same value of the likelihood function (\citealt{Rothenberg_1971}). $\mathcal{Q}(\bm{\phi} \mid S)$ induces identified sets for other parameters, such as impulse responses. For example, the identified set for $\eta_{ijh}(\bm{\phi},\mathbf{Q})$ is $\{\eta_{ijh}(\bm{\phi},\mathbf{Q}): \mathbf{Q} \in \mathcal{Q}(\bm{\phi}\mid S)\}$.

\subsection{Sampling problem}
\label{subsec:bayesianinference}

We focus on the problem of sampling $\mathbf{Q}$ from the uniform distribution (Haar measure) over $\mathcal{Q}(\bm{\phi} \mid S)$:
\begin{equation}\label{eq:samplingproblem}
    \pi_{\mathbf{Q}|\bm{\phi}}(\mathbf{Q}\mid \bm{\phi}; S) = \frac{\mathbbm{1}\left(S(\bm{\phi},\mathbf{Q}) \geq \mathbf{0}_{s\times 1}\right)}{\int_{\tilde{Q} \in \mathcal{Q}(\bm{\phi}\mid S)} \mathbbm{1}(S(\bm{\phi},\tilde{\mathbf{Q}}) \geq \mathbf{0}_{s\times 1})d\tilde{\mathbf{Q}}}.
\end{equation}
Sampling from this distribution is required when conducting Bayesian inference under a conditionally uniform prior for $\mathbf{Q}$ (\citealt{DelNegro_Schorfheide_2011,Uhlig_2017,Amir-Ahmadi_Drautzburg_2021}). Under this prior, posterior draws can be obtained by drawing $\bm{\phi}$ from its posterior (typically normal-inverse-Wishart) and obtaining a fixed number of draws of $\mathbf{Q}$ from the uniform distribution over $\mathcal{Q}(\bm{\phi}\mid S )$. Draws of $\mathbf{Q}$ from within $\mathcal{Q}(\bm{\phi}\mid S )$ can also be used to approximate identified sets, providing a key input for prior-robust Bayesian inference procedures (\citealt{Giacomini_Kitagawa_2021,Giacomini_Kitagawa_Read_2022b,Giacomini_Kitagawa_Read_2023,Giacomini_Kitagawa_Read_2026}) and some frequentist methods.\footnote{Computing bounds of identified sets is required when constructing frequentist estimators of identified sets (e.g. \citealt{Ludvigson_Ma_Ng_2021}) and when implementing some frequentist inference procedures in set-identified SVARs (e.g. \citealt{Gafarov_Meier_MontielOlea_2025}).}

\cite{Arias_Rubio-Ramirez_Shin_2026} develop a Gibbs sampler to sample from the posterior of the orthogonal reduced-form parameters given an unconditionally uniform prior for $\mathbf{Q}$, which is the prior considered in \cite{Arias_Rubio-Ramirez_Waggoner_2018,Arias_Rubio-Ramirez_Waggoner_2025}. Their sampler requires sampling $\mathbf{Q}$ from a conditional density that is proportional to (\ref{eq:samplingproblem}). The approach that we propose could therefore also be used to sample from a posterior induced by the unconditionally uniform prior for $\mathbf{Q}$.

\section{Algorithms}
\label{sec:algorithms}
This section describes algorithms that can be used to draw $\mathbf{Q}$ from the uniform distribution over $\mathcal{Q}(\bm{\phi}\mid S )$. The algorithms can be combined with a sampler for $\bm{\phi}$ to draw from a joint posterior (or prior) distribution for the orthogonal reduced-form parameters. As a benchmark, we first describe an accept-reject algorithm. We then introduce our general approach to sampling based on soft sign restrictions, before describing a specific MCMC sampler -- the slice sampler -- that we use to implement our general approach.

\subsection{Accept-reject sampling}
\label{subsec:acceptreject}

The following algorithm describes an accept-reject sampler for drawing from the uniform distribution over $\mathcal{Q}(\bm{\phi}\mid S)$.
\begin{algorithm}
	[\textbf{Accept-reject sampling}]\label{algo:acceptreject}
	For a given value of $\bm{\phi}$:
	\begin{enumerate}[label=\arabic*)]
		\item Draw an $n\times n$ matrix $\mathbf{Z}$ of independent standard normal random variables and let $\mathbf{Z} = \mathbf{Q}\mathbf{R}$ be the QR decomposition of $\mathbf{Z}$, where $\mathbf{Q}$ is orthonormal and $\mathbf{R}$ is upper-triangular with nonnegative diagonal elements.
		\item Keep the draw if it satisfies $S(\bm{\phi},\mathbf{Q}) \geq \mathbf{0}_{s\times 1}$ and terminate the algorithm. Otherwise, return to Step~1.
	\end{enumerate}
\end{algorithm}
Step~1 draws $\mathbf{Q}$ from a uniform distribution (or Haar measure) over $\mathcal{O}(n)$ (\citealt{Stewart_1980,Rubio-Ramirez_Waggoner_Zha_2010}). Step~2 is the accept-reject step. The algorithm is repeated to obtain the desired number of draws.

Let $Q(\mathbf{Z})$ denote the orthonormal matrix $\mathbf{Q}$ in the QR decomposition of $\mathbf{Z}$, defined by $\mathbf{Z} = \mathbf{Q}\mathbf{R}$. Algorithm~\ref{algo:acceptreject} can be interpreted as drawing $\mathbf{Z}$ from a truncated standard matrix normal distribution with density
\begin{equation}\label{eq:truncateddist}
	f(\mathbf{Z} \mid Q(\mathbf{Z}) \in \mathcal{Q}(\bm{\phi}\mid S )) = \frac{f_{Z}(\mathbf{Z})\mathbbm{1}(Q(\mathbf{Z}) \in \mathcal{Q}(\bm{\phi}\mid S ))}{\int_{\{\tilde{\mathbf{Z}}: Q(\tilde{\mathbf{Z}}) \in \mathcal{Q}(\bm{\phi}\mid S )\}}f_{Z}(\tilde{\mathbf{Z}})d\tilde{\mathbf{Z}}},
\end{equation}
where $f_{Z}(\mathbf{Z})$ is the standard matrix normal density.\footnote{For a definition of the standard matrix normal distribution, see \cite{Gupta_Nagar_2000}.} The density in (\ref{eq:truncateddist}) is well-defined provided that the probability
\begin{equation}
    p(\bm{\phi}\mid S )\equiv \int_{\{\mathbf{Z}: Q(\mathbf{Z}) \in \mathcal{Q}(\bm{\phi}\mid S )\}}f_{Z}(\mathbf{Z})d\mathbf{Z}
\end{equation}
is strictly positive. When embedding the accept-reject sampler within a posterior sampler for $\bm{\phi}$, we assume that $p(\bm\phi\mid S )\geq \underline p>0$ for almost every $\bm{\phi}$. This rules out cases where $\mathcal{Q}(\bm{\phi} \mid S)$ is empty (which we discuss further below) or where $\mathcal{Q}(\bm{\phi} \mid S)$ lies on a lower-dimensional submanifold of $\mathcal{O}(n)$; for example, it excludes inequality restrictions that together imply an equality (e.g. $\eta_{ijh} \geq 0$ and $\eta_{ijh} \leq 0$).

The challenge with using accept-reject sampling in this setting is that it may take a large number of candidate draws (and thus computational time) to obtain a sufficiently large number of draws satisfying the identifying restrictions. This will occur when identification is `tight' and $p(\bm\phi\mid S )$ is small.

\subsection{Soft sign restrictions}
\label{subsec:softsign}

The indicator function $\mathbbm{1}(Q(\mathbf{Z}) \in \mathcal{Q}(\bm{\phi}\mid S ))$ in (\ref{eq:truncateddist}) can be decomposed into a product of indicator functions corresponding to individual sign restrictions:
\begin{equation}
	\mathbbm{1}(Q(\mathbf{Z}) \in \mathcal{Q}(\bm{\phi}\mid S )) = \prod_{l=1}^{s}\mathbbm{1}(Q(\mathbf{Z}) \in \mathcal{Q}(\bm{\phi}\mid S ^{(l)})),
\end{equation}
where $\mathcal{Q}(\bm{\phi}\mid S ^{(l)}) = \left\{\mathbf{Q} \in \mathcal{O}(n): S^{(l)}(\bm{\phi},\mathbf{Q}) \geq 0\right\}$ and $S^{(l)}(\bm{\phi},\mathbf{Q}) \geq 0$ represents the $l$th sign restriction with $l=1,\ldots,s$. The key feature underlying our approach is that we replace these indicator functions with a smooth regularisation function $\Lambda(x,\Delta)$ satisfying the following assumption.

\begin{assumption} \label{ass1}
	The regularisation function $\Lambda(x,\Delta) : \mathbb{R} \times \mathbb{R}_{+} \rightarrow (0,1)$ satisfies
	\begin{align*}  
		\lim_{x \rightarrow \infty} \Lambda(x,\Delta) &= 1 \quad \forall \Delta \in \mathbb{R}_{+} \\
		\lim_{x\rightarrow -\infty} \Lambda(x,\Delta) &= 0 \quad \forall \Delta \in \mathbb{R}_{+} \\
		\lim_{\Delta \rightarrow 0} \Lambda(x,\Delta) &= 
		\begin{cases}
			1 & x \geq 0 \\
			0 & x < 0
		\end{cases}
	\end{align*}
	In addition, for some finite $K>0$, $\Lambda(x,\Delta)$ satisfies
	\[
	|\Lambda(x,\Delta)-\mathbbm{1}(x\geq0)|\leq K
	\]
	for all $\Delta\in\mathbb{R}_+$ and $x\in\mathbb{R}$.
\end{assumption}
$\Lambda(x,\Delta)$ can be interpreted as smoothly penalising values of $\mathbf{Q}$ (equivalently, $\mathbf{Z}$) that violate (or are close to violating) the sign restrictions. In the limit as $\Delta \rightarrow 0$, the regularisation function converges to the indicator function.

One choice for $\Lambda(x,\Delta)$ that satisfies Assumption~\ref{ass1} (and that we will make use of below) is the logistic function:
\begin{equation}\label{eq:logistic}
	\Lambda(x,\Delta)=\frac{1}{1+\exp(-x/\Delta)}.
\end{equation}
We proceed with this choice because it is computationally inexpensive to evaluate relative to some alternatives (e.g. the standard normal cumulative distribution function (CDF)). The specific choice of regularisation function has no substantive bearing on the properties of our approach (e.g. the quality of posterior approximations), though it may affect computational speed (e.g. if the chosen regularisation function is expensive to evaluate).

We propose sampling from a smooth density that replaces the indicator function with the regularisation function:
\begin{equation}
	f_{\Delta}(\mathbf{Z}) = \frac{f_{Z}(\mathbf{Z})\prod_{l=1}^{s}\Lambda(S^{(l)}(\bm{\phi},Q(\mathbf{Z})),\Delta)}{\int f_{Z}(\tilde{\mathbf{Z}})\prod_{l=1}^{s}\Lambda(S^{(l)}(\bm{\phi},Q(\tilde{\mathbf{Z}})),\Delta)d\tilde{\mathbf{Z}}}.
\end{equation}
Replacing the indicator function with the regularisation function smoothly penalises values of $\mathbf{Z}$ (or $\mathbf{Q}$) that violate the sign restrictions by downweighting their density. The advantage of working with this smooth density is that alternative sampling algorithms, such as MCMC methods, can be directly applied, obviating the need for accept-reject sampling. In the limit as $\Delta \rightarrow 0$, $f_{\Delta}(\mathbf{Z})$ approaches the truncated density $f(\mathbf{Z} \mid Q(\mathbf{Z}) \in \mathcal{Q}(\bm{\phi}\mid S ))$. This claim is formalised in the following proposition.

\begin{proposition}\label{prop:conv}
	Assume the conditions in Assumption~\ref{ass1} hold and let $T:\mathbb{R}^d\rightarrow \mathbb{R}$ be such that $\int_{\mathbb{R}^d}|T(\mathbf{Z})|f_Z(\mathbf{Z})d\mathbf{Z}<\infty$. Then,
	\begin{equation}
		\lim_{\Delta\rightarrow 0}\left|\mathbb{E}_f(T(\mathbf{Z}))-\mathbb{E}_\Delta(T(\mathbf{Z}))\right| = 0,
	\end{equation}
	where $\mathbb{E}_f(.)$ and $\mathbb{E}_\Delta(.)$ are expectations taken under $f$ and $f_\Delta$, respectively.
\end{proposition}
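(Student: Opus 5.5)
Write $g_\Delta(\mathbf{Z}) = \prod_{l=1}^{s}\Lambda(S^{(l)}(\bm{\phi},Q(\mathbf{Z})),\Delta)$ and $g_0(\mathbf{Z}) = \mathbbm{1}(Q(\mathbf{Z}) \in \mathcal{Q}(\bm{\phi}\mid S))$. Both expectations are ratios of integrals against $f_Z$:
\[
\mathbb{E}_\Delta(T) = \frac{N_\Delta}{D_\Delta},\qquad \mathbb{E}_f(T) = \frac{N_0}{D_0},
\]
with $N_\Delta = \int T g_\Delta f_Z$, $D_\Delta = \int g_\Delta f_Z$, and $N_0$, $D_0$ defined the same way using $g_0$. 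Note that $D_0 = p(\bm{\phi}\mid S) > 0$ by the maintained assumption. The plan is to show $N_\Delta \to N_0$ and $D_\Delta \to D_0$ by dominated convergence. The conclusion then follows from continuity of the map $(a,b)\mapsto a/b$ at $b = D_0 > 0$. Explicitly,
\[
\left|\frac{N_\Delta}{D_\Delta} - \frac{N_0}{D_0}\right| \le \frac{|N_\Delta - N_0|}{D_\Delta} + \frac{|N_0|\,|D_\Delta - D_0|}{D_\Delta D_0},
\]
and $D_\Delta \geq D_0/2$ for small $\Delta$.

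\textbf{Domination.} Since $\Lambda$ takes values in $(0,1)$, we have $0 < g_\Delta \leq 1$. Hence $|T g_\Delta f_Z| \leq |T| f_Z$, which is integrable by hypothesis, and $g_\Delta f_Z \leq f_Z$ is integrable. The bound $|\Lambda - \mathbbm{1}| \leq K$ in Assumption~\ref{ass1} also supplies a uniform bound, so either route gives an integrable dominating function independent of $\Delta$. Dominated convergence applies along any sequence $\Delta_k \to 0$, which suffices for the limit.

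\textbf{Pointwise convergence (main obstacle).} By Assumption~\ref{ass1}, for each $l$ we have $\Lambda(S^{(l)}(\bm{\phi},Q(\mathbf{Z})),\Delta) \to \mathbbm{1}(S^{(l)}(\bm{\phi},Q(\mathbf{Z})) \ge 0)$ for every $\mathbf{Z}$, including boundary points where $S^{(l)} = 0$, since the limit is defined as $1$ at $x=0$. A finite product of convergent sequences converges to the product of the limits. Therefore $g_\Delta(\mathbf{Z}) \to g_0(\mathbf{Z})$ for every $\mathbf{Z}$ at which $Q(\mathbf{Z})$ is well defined, which holds off the null set of singular $\mathbf{Z}$.

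The step requiring care is exactly this pointwise limit. The convergence at $x=0$ depends on how Assumption~\ref{ass1} specifies the limit there. For the logistic choice, $\Lambda(0,\Delta)=1/2$ for all $\Delta$, so convergence to $1$ fails on the boundary set $\{\mathbf{Z}: S^{(l)}(\bm{\phi},Q(\mathbf{Z}))=0 \text{ for some } l\}$. I would therefore add the observation that this boundary set is $f_Z$-null. This holds because the restrictions are non-degenerate functions, typically real-analytic and not identically zero in $\mathbf{Z}$, whose zero sets have Lebesgue measure zero; it is also consistent with the paper's exclusion of restrictions that force equalities. Pointwise convergence almost everywhere is all that dominated convergence needs, so this closes the argument.
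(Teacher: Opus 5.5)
Your argument is correct and is essentially the paper's: both rest on pointwise convergence of the regularised indicator, domination by $|T|f_Z$, and convergence of the normalising constant to $C_f = p(\bm{\phi}\mid S)>0$. The paper bounds $\int |T|\,|f-f_\Delta|$ with $s=1$ WLOG and cites monotone convergence, whereas your ratio form with dominated convergence handles general $\Lambda$ and $s>1$ explicitly and makes visible the needed lower bound on $C_\Delta$. Your remark about $x=0$ is well taken: the logistic has $\Lambda(0,\Delta)=1/2$, so it does not literally satisfy Assumption~\ref{ass1}, and for that choice both your proof and the paper's need the boundary sets $\{S^{(l)}(\bm{\phi},Q(\mathbf{Z}))=0\}$ to be $f_Z$-null. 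This is a mild extra regularity condition, not something implied by the stated assumptions.
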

For $\Delta > 0$, draws of $\mathbf{Z}$ obtained from $f_{\Delta}(\mathbf{Z})$ will not necessarily satisfy the sign restrictions and -- conditional on satisfying the sign restrictions -- will not follow the desired truncated normal distribution; equivalently, draws of $\mathbf{Q}$ will not be uniformly distributed over $\mathcal{Q}(\bm{\phi}\mid S)$. However, an importance-sampling step can be applied to obtain draws from the uniform distribution. The importance weights are given by
\begin{equation}\label{eq:iweights}
	\frac{f(\mathbf{Z} \mid  Q(\mathbf{Z}) \in \mathcal{Q}(\bm{\phi}\mid S))}{f_{\Delta}(\mathbf{Z})} \propto \frac{\mathbbm{1}(Q(\mathbf{Z}) \in \mathcal{Q}(\bm{\phi}\mid S))}{\prod_{l=1}^{s}\Lambda(S^{(l)}(\bm{\phi},Q(\mathbf{Z})),\Delta)}.
\end{equation}
The normalising constant for the importance weights is computationally costly to obtain, so we use a self-normalised importance sampler, where the importance weights on the right-hand side of (\ref{eq:iweights}) are normalised to sum to one across draws from $f_{\Delta}(\mathbf{Z})$. Self-normalisation induces a small finite-sample bias but yields consistent estimators as the number of importance draws increases (e.g. \citealt{Robert_Casella_2004}). A corollary of Proposition~\ref{prop:conv} is that the normalising constant converges to one as $\Delta \rightarrow 0$. This implies that any finite-sample bias should be small for small $\Delta$.

A small $\Delta$ can introduce sampling inefficiencies as $f_{\Delta}(\mathbf{Z})$ will have regions where the log target density changes sharply. In a random walk Metropolis algorithm, such features require a relatively small tuning parameter (i.e. proposal scale) to achieve reasonable acceptance rates, since larger steps are more likely to be rejected in regions with high local curvature. In the next section, we discuss an alternative method -- slice sampling -- that is more robust in such settings and can offer improved efficiency when exploring irregular target distributions.

For importance sampling to produce valid inference, the importance weights must have finite variance (\citealt{Geweke_1989,Koopman_Shephard_Creal_2009}). Assuming that $\Lambda(x,\Delta)$ is logistic, the following proposition shows that the (unnormalised) importance weights are uniformly bounded for all $\Delta > 0$, independently of the omitted normalising constant. This immediately implies that their variance is finite.\footnote{The same bounds apply under any regularisation function satisfying Assumption~\ref{ass1} and where $\Lambda(x,\Delta) \geq 1/2$ for all $x\geq 0$; this would be the case when $\Lambda(x,\Delta)$ is the CDF of a random variable with median zero (e.g. the standard normal CDF).}

\begin{proposition}\label{prop:bounded_weights}
Let the (unnormalised) importance weight be
\[
\tilde w_{\Delta}(\mathbf{Z})
\;\equiv\;
\frac{\mathbbm{1}\!\left(Q(\mathbf{Z})\in \mathcal Q(\bm{\phi}\mid S)\right)}
{\prod_{\ell=1}^s \Lambda\!\left(S^{(\ell)}(\bm{\phi},Q(\mathbf{Z})),\Delta\right)},
\]
where $\Lambda(x,\Delta)$ is the logistic function in (\ref{eq:logistic}). $\tilde w_{\Delta}(\mathbf{Z})$ satisfies
\[
0\ \le\ \tilde w_{\Delta}(\mathbf{Z})\ \le\ 2^s.
\]
\end{proposition}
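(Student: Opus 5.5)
The plan is to split on whether the draw satisfies the sign restrictions and, in the nontrivial case, bound each logistic factor in the denominator from below by $1/2$.

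\emph{Case 1: $Q(\mathbf{Z})\notin\mathcal{Q}(\bm{\phi}\mid S)$.} The indicator in the numerator is zero. The denominator is a product of $s$ logistic values. Each value lies in $(0,1)$, so the denominator is strictly positive and the weight is well defined. Hence $\tilde w_{\Delta}(\mathbf{Z})=0$ in this case, and both bounds hold trivially.

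\emph{Case 2: $Q(\mathbf{Z})\in\mathcal{Q}(\bm{\phi}\mid S)$.} Here the numerator equals one. By the definition of the identified set, every individual restriction satisfies $x_\ell\equiv S^{(\ell)}(\bm{\phi},Q(\mathbf{Z}))\geq 0$. For the logistic function with $\Delta>0$, $x_\ell\ge 0$ gives $\exp(-x_\ell/\Delta)\le 1$. Therefore
\[
\Lambda(x_\ell,\Delta)=\frac{1}{1+\exp(-x_\ell/\Delta)}\ \ge\ \tfrac12 .
\]
Multiplying these $s$ positive lower bounds gives $\prod_{\ell=1}^{s}\Lambda(x_\ell,\Delta)\ge 2^{-s}$. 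Taking reciprocals yields $\tilde w_\Delta(\mathbf{Z})\le 2^s$. Nonnegativity is immediate because the weight is a ratio of a nonnegative numerator and a positive denominator.

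There is no real obstacle. The one point needing care is that the bound must be uniform in both $\Delta$ and $\mathbf{Z}$. This holds because the inequality $\Lambda(x,\Delta)\ge 1/2$ for $x\ge0$ does not depend on $\Delta$. It also uses only the fact that the hard restrictions hold exactly whenever the indicator is nonzero, so no information about the shape of $S$ (for example, nonlinearity) is needed.

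Finally, since $0\le\tilde w_\Delta\le 2^s$, the second moment is at most $4^s$ under any sampling distribution, including $f_\Delta$. This gives the finite-variance property noted after the statement. The same argument applies to any $\Lambda$ satisfying Assumption~\ref{ass1} with $\Lambda(x,\Delta)\ge1/2$ for $x\ge0$, as remarked in the footnote.
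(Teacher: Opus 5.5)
Your proposal is correct and follows the same route as the paper's proof: the weight is zero when the restrictions fail, and when they hold each $S^{(\ell)}\geq 0$ gives $\exp(-S^{(\ell)}/\Delta)\le 1$, hence $\Lambda\ge 1/2$, so the product is at least $2^{-s}$ and the reciprocal is at most $2^s$. Your additional remarks on uniformity in $\Delta$ and the implied finite second moment are accurate.
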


The choice of $\Delta$ controls how closely the smoothed density $f_\Delta(\mathbf{Z})$ approximates the hard-truncated target density. As $\Delta \to 0$, the logistic regularisation function converges pointwise to the indicator function defining the identifying restrictions, so $f_\Delta(\mathbf{Z})$ converges to the target density in Proposition 1. In this sense, smaller values of $\Delta$ reduce the discrepancy between the proposal distribution generated by the slice sampler and the desired hard-truncated distribution. Moreover, for draws satisfying the restrictions and lying away from the boundary of the identified set, the logistic factors are close to one when $\Delta$ is small, so the importance weights are close to constant. This suggests that, other things equal, smaller values of $\Delta$ should improve the efficiency of the importance-sampling correction.

There are nevertheless two practical considerations. First, when the identified set is disconnected, larger values of $\Delta$ may facilitate movement between disconnected components by assigning non-negligible probability to regions outside the identified set. Very small values of $\Delta$ remove this soft bridge and can therefore make global mixing across disconnected components more difficult, although this is distinct from the local stickiness associated with random-walk Metropolis algorithms. Second, very small values of $\Delta$ may create numerical issues when evaluating the logistic factors, so the log target density should be evaluated using numerically stable formulae. In practice, we recommend choosing $\Delta$ small enough that posterior summaries are stable and the effective sample size of the importance weights is high, while also checking that the Markov chain explores the relevant regions of the identified set. In our empirical application, posterior summaries
are stable over a broad range of small values of $\Delta$, and the effective number of draws per unit of computation is highest for values around $10^{-3}$ to $10^{-4}$.

\subsection{Slice sampling}
\label{subsec:slice}
There are many MCMC methods that could be used to sample from $f_{\Delta}(\mathbf{Z})$. We use the slice sampler, motivated by its robust convergence properties, efficiency (relative to standard random walk Metropolis) and ease of implementation (\citealt{Neal_2003}).

The slice sampler is motivated by the fact that sampling from $f_{\Delta}(\mathbf{Z})$ is equivalent to sampling uniformly from the region under the density function. The `simple' slice sampler constructs a Markov chain that converges to this uniform distribution by alternating between two steps: 1) sample $y$ uniformly from $[0,f_{\Delta}(\mathbf{Z}_{k})]$ given some predetermined $\mathbf{Z}_{k}$; and 2) sample $\mathbf{Z}_{k+1}$ uniformly from the `slice' $\mathcal{S}(y) = \{\mathbf{Z}: f_{\Delta}(\mathbf{Z}) > y\}$.\footnote{It is only necessary to evaluate $f_{\Delta}(\mathbf{Z})$ up to a constant of proportionality.} Iterating over these steps generates dependent draws from the target density.

Theoretical results for the simple slice sampler provide motivation for using slice-based methods in this setting. For example, \citet{Mira_Tierney_2002} prove that, if the target density is bounded and has support with finite Lebesgue measure, the simple slice sampler is uniformly ergodic. Moreover, as noted by \citet{Roberts_Rosenthal_1999}, the simple slice sampler is almost always geometrically ergodic. These results are useful because they highlight the robustness of slice sampling relative to many random-walk methods, particularly for non-standard target densities. However, the simple slice sampler requires drawing exactly from the full slice $\mathcal{S}(y)$, which is infeasible in the multivariate setting considered here. Our implementation therefore follows the multivariate shrinking-hyperrectangle procedure described by \citet{Neal_2003}, with a contaminated choice of hyperrectangle width to encourage both local exploration and occasional larger moves. To summarise, the procedure: randomly positions a hyperrectangle with side length $w$ around the initial point; draws a point uniformly from the hyperrectangle; and repeatedly shrinks the hyperrectangle (`shrinking in') if the candidate lies outside the slice until an admissible draw is obtained.

The uniform- and geometric-ergodicity results for the simple slice sampler do not directly imply identical guarantees for this implemented multivariate sampler. The relevant property of the shrinking procedure is instead invariance: conditional on the auxiliary slice height, the hyperrectangle-shrinking step is constructed to leave the uniform distribution over the slice invariant. Consequently, the resulting Markov chain has $f_{\Delta}(\mathbf{Z})$ as its invariant density under the usual irreducibility and aperiodicity conditions for the implemented transition kernel. We therefore use the ergodicity results for the simple slice sampler as motivation for adopting a slice-sampling approach, while relying on invariance of the shrinking procedure to justify the target distribution of the implemented algorithm.

The choice of side length $w$ will affect the sampler's convergence speed and computational efficiency. Under the general class of identifying restrictions we consider, there is no guarantee that $\mathcal{Q}(\bm{\phi}\mid S)$ is path connected, implying that $f_{\Delta}(\mathbf{Z})$ may be multimodal. A small value of $w$ can therefore lead to poor mixing across modes and slow convergence. Conversely, setting $w$ too large may reduce computational efficiency, as many shrinking-in steps can be required to obtain an admissible draw. We balance these considerations by using a contaminated proposal, where $w=2$ with 95~per cent probability and $w=6$ with 5~per cent probability. Choices of $w$ on this scale are reasonable given that, for small $\Delta$, the target distribution resembles a truncated standard matrix normal distribution.

We are now in a position to describe our algorithm for sampling $\mathbf{Q}$ from the uniform distribution over $\mathcal{Q}(\bm{\phi} \mid S)$, implemented using soft sign restrictions and the slice sampler.
\begin{algorithm}
    [\textbf{Sampling via soft sign restrictions and the slice sampler}.]\label{algo:softsign}
    Given $\bm{\phi}$, choices for the regularisation function $\Lambda(x,\Delta)$ and regularisation parameter $\Delta$, and initial value $\mathbf{Z}^{(0)}$:
    \begin{enumerate}[label=\arabic*)]
        \item Obtain $M$ draws $\mathbf{z}^{(m)} = \mathrm{vec}(\mathbf{Z}^{(m)})$ from $f_{\Delta}(\mathbf{Z})$ using the slice sampler with initial value $\mathbf{z}^{(0)}$. For $m=1,\ldots,M$:
        \begin{enumerate}[label=\roman*)]
            \item Sample $y \sim \mathrm{Uniform}(0,f_{\Delta}(\mathbf{Z}^{(m-1)}))$.
            \item Randomly position a hyperrectangle $\mathcal{H}$ with side length $w$ around $\mathbf{z}^{(m-1)}$, where $w = 2$ with probability $\alpha$ and $w=6$ with probability $1-\alpha$:
            \begin{equation*}
                \mathcal{H} = (L_1,R_1)\times \ldots \times (L_{n^2},R_{n^2}),
            \end{equation*}
            where $L_i = z_{i}^{(m-1)} - wU_i$, $R_i = L_i + w$ and $U_i \sim \mathrm{Uniform}(0,1)$ for $i=1,\ldots,n^2$.
            \item Draw $\tilde{\mathbf{z}} = (\tilde{z}_1,\ldots,\tilde{z}_{n^2})'$ from a uniform distribution over $\mathcal{H}$: $\tilde{z}_i = L_i + U_i(R_i - L_i)$ for $i=1,\ldots,n^2$.
            \item If $f_{\Delta}(\tilde{\mathbf{Z}}) > y$ set $\mathbf{Z}^{(m)} = \tilde{\mathbf{Z}}$ and return to (i). Otherwise, shrink $\mathcal{H}$ by setting $L_i = \tilde{z}_i$ if $\tilde{z}_i < z_{i}^{(m-1)}$ or $R_{i} = \tilde{z}_i$ if $\tilde{z}_i > z_{i}^{(m-1)}$ (for $i=1,\ldots,n^2$), and return to (iii). 
        \end{enumerate}
        \item For $m=1,\ldots,M$, compute $\mathbf{Q}^{(m)} = Q(\mathbf{Z}^{(m)})$, where $Q(\mathbf{Z}^{(m)})$ is the orthonormal matrix in the QR decomposition of $\mathbf{Z}^{(m)}$, and evaluate the (unnormalised) importance weights $\tilde w_{\Delta}(\mathbf{Z}^{(m)})$.
        \item Obtain $K$ resampled draws using $\tilde w_{\Delta}(\mathbf{Z})$ as importance weights.
    \end{enumerate}
\end{algorithm}
In practice we select the initial value $\mathbf{Z}^{(0)}$ using numerical optimisation to find a (potentially local) maximum of $\log f_{\Delta_0}(\mathbf{Z})$ with $\Delta_0 \geq \Delta$; experiments suggest that this initialisation strategy increases the efficiency of the sampler relative to initialising at a random draw.\footnote{We use the `fminsearch' algorithm in MATLAB, which is a Nelder-Mead simplex algorithm (\citealt{Lagarias_etal_1998}). The optimiser is initialised at a random draw of $\mathbf{Z}$ from $f_{Z}(\mathbf{Z})$.} Step~1 uses the slice sampler described in Section~\ref{subsec:slice} to obtain $M$ draws of $\mathbf{Z}$ from the smoothed target density $f_{\Delta}(\mathbf{Z})$. As we have noted above, other MCMC algorithms could be applied in this step. For each draw of $\mathbf{Z}$, Step~2 computes the associated draw of $\mathbf{Q}$ and its importance weight. Finally, Step~3 resamples the draws using importance sampling; because $w_{\Delta}(\mathbf{Z}^{(k)})=0$ whenever $Q(\mathbf{Z}^{(k)}) \notin \mathcal{Q}(\bm{\phi} \mid S)$, the importance-sampling step discards draws that violate the identifying restrictions. In the examples and applications below, we set $K=M$. The sampler can be embedded within a sampler for $\bm{\phi}$.

If the draws of $\mathbf{Q}$ are used to approximate the bounds of an identified set, such as when conducting prior-robust Bayesian inference, resampling the draws in Step~4 is unnecessary and it suffices to discard draws that violate the sign restrictions (i.e. with $\tilde w_{\Delta}(\mathbf{Z}) = 0$). This is because the approximated bounds depend only on the extrema of the parameter of interest over the identified set and not on its distribution.

\subsection{Advantages of `softening'}
\label{sec:softening}

The accept-reject algorithm samples from a truncated density that assigns zero probability outside the identified set. Consequently, the algorithm discards proposal draws without regard to their distance from the boundary of the identified set. For example, draws that narrowly violate a single sign restriction are treated identically to draws that violate many restrictions by a wide margin, so the sampler does not learn about the geometry or location of the identified set.

Smoothly penalising parameter values that violate the identifying restrictions converts the truncated density into a continuous target density that is well suited to MCMC sampling. MCMC methods naturally leverage local information to move toward and within high-density regions. Under the smoothed density, parameter values that are closer to satisfying the restrictions receive higher probability mass than those that are further away, creating a gradient that guides the Markov chain toward the identified set. This allows MCMC algorithms to exploit local information to move efficiently toward the identified set and to explore it without repeatedly proposing infeasible draws. In contrast to accept–reject sampling, near-feasible draws are no longer discarded but instead play an active role in directing the sampler.

MCMC methods could in principle be applied to directly sample from the truncated distribution $f(\mathbf{Z} \mid  Q(\mathbf{Z}) \in \mathcal{Q}(\bm{\phi}\mid S))$. Softening the sign restrictions has two main advantages over such an approach. 

First, an important practical consideration is that the slice sampler, like other related MCMC samplers, requires an initial value $\mathbf{Z}_0$ that lies within the identified set, so that it is assigned positive density under the truncated distribution. In problems where $\mathcal{Q}(\bm{\phi}\mid S)$ is small relative to $\mathcal{O}(n)$, constructing such an initial value may be nontrivial; na{\"i}ve random initialisation may fail with very high probability and deterministic feasibility searches can be costly. Consequently, the overall runtime of the sampler may be dominated by the effort required to obtain a feasible initial value. Our approach eliminates the requirement that the chain be initialised within the identified set. Since $f_{\Delta}(\mathbf{Z}) > 0$ for all $\mathbf{Z}$, the sampler can start from a generic point and drift toward regions where the identifying restrictions are satisfied, rather than relying on a costly search for a feasible initial value.

Second, our approach can improve global mixing when the identified set is not path-connected, in which case the truncated density features multiple modes separated by regions of zero probability density. Under hard constraints, the chain can transition between disconnected regions only if the proposal intersects another region of the identified set, which may occur with very low probability. Under the smooth density $f_{\Delta}(\mathbf{Z})$, regions that lie outside the identified set receive small but nonzero probability, providing a continuous `bridge' through which the chain may move between modes. This effect is analogous to tempering: larger $\Delta$ makes $f_{\Delta}(\mathbf{Z})$ flatter and easier to explore, while smaller $\Delta$ makes it concentrate near the identified set.

In summary, the smoothed density $f_{\Delta}(\mathbf{Z})$ is a computationally convenient approximation to the hard-truncated distribution that facilitates initialisation and exploration of the parameter space. Moreover, because $f_{\Delta}(\mathbf{Z})$ coincides with the truncated distribution in the limit as $\Delta \rightarrow 0$, small values of $\Delta$ yield a proposal distribution that is suitable for importance sampling.

\subsection{Empty identified sets}
\label{sec:emptyset}

The discussion above has assumed that $\mathcal{Q}(\bm{\phi}\mid S)$ is nonempty at the value of $\bm{\phi}$ under consideration (e.g. a posterior draw or the maximum likelihood estimator). However, $\mathcal{Q}(\bm{\phi}\mid S)$ may be empty at some (or all) values of $\bm{\phi}$. An empty $\mathcal{Q}(\bm{\phi}\mid S)$ means that the identifying restrictions are incompatible with $\bm{\phi}$. Ideally, we would know whether $\mathcal{Q}(\bm{\phi}\mid S)$ is nonempty before attempting to sample from it. While there are algorithms to verify whether identified sets are nonempty, these are applicable only under specific types of identifying restrictions.\footnote{\cite{Amir-Ahmadi_Drautzburg_2021}, \cite{Giacomini_Kitagawa_Volpicella_2022} and \cite{Read_2022} propose algorithms for verifying whether identified sets are nonempty when there are linear restrictions on a single column of $\mathbf{Q}$.} 

Following \cite{Giacomini_Kitagawa_2021}, we treat $\mathcal{Q}(\bm{\phi}\mid S)$ as empty when, after $M$ attempted draws of $\mathbf{Q}$, none satisfy the identifying restrictions. With finite $M$, a nonempty $\mathcal{Q}(\bm{\phi}\mid S)$ may be misclassified as empty, although the probability of misclassification vanishes as $M \rightarrow \infty$. This issue applies to both accept-reject sampling and our sampler.\footnote{Both samplers have zero probability of classifying $\mathcal{Q}(\bm{\phi}\mid S)$ as nonempty when it is empty.} In the empirical applications below, our sampler has a higher probability of correctly classifying the identified set as nonempty than accept-reject (for the same $M$). This is because our sampler tends to drift towards $\mathcal{Q}(\bm{\phi}\mid S)$, as discussed above. More generally, when embedding our sampler within a sampler for $\bm{\phi}$, repeated failure to obtain draws of $\bm{\phi}$ such that $\mathcal{Q}(\bm{\phi}\mid S)$ is nonempty suggests that the identifying restrictions need to be relaxed.\footnote{Emptiness of $\mathcal{Q}(\bm{\phi}\mid S)$ at posterior draws of $\bm{\phi}$ does not necessarily imply that the identifying restrictions are incorrect. To establish this, one would need to verify whether $\mathcal{Q}(\bm{\phi}_0\mid S)$ is empty, where $\bm{\phi}_0$ is the (pseudo) true value of $\bm{\phi}$. However, when $\mathcal{Q}(\bm{\phi}\mid S)$ is empty with high posterior probability, this can be interpreted as evidence against the `plausibility' of the restrictions; see \cite{Giacomini_Kitagawa_2021} and \cite{Giacomini_Kitagawa_Read_2022a} for further discussion.} 

\section{Numerical Illustrations}
\label{sec:numericalexamples}
This section uses a bivariate model to illustrate our method and compare it with accept-reject sampling. The model's simplicity allows us to transparently control the size of the identified set and facilitates visualisation. We first consider a case where $\mathcal{Q}(\bm{\phi} \mid S)$ is path-connected and then examine a more challenging case where $\mathcal{Q}(\bm{\phi} \mid S)$ consists of disconnected regions.

\subsection{Connected identified set}
\label{subsec:connected}

Let $\mathbf{y}_{t} = (p_{t},q_{t})'$ contain log price and quantity. Assume $\mathbf{y}_t$ follows the SVAR($0$) $\mathbf{y}_t = \mathbf{A}_0^{-1}\bm{\varepsilon}_t = \bm{\Sigma}_{tr}\mathbf{Q}\mathbf{u}_t$ and let $\bm{\phi}=\mathrm{vech}(\bm{\Sigma}_{tr}) = (\sigma_{11},\sigma_{21},\sigma_{22})'$. The space of $2\times 2$ orthonormal matrices $\mathcal{O}(2)$ can be represented as
\begin{equation}
	\mathcal{O}(2) = \left\{
	\begin{bmatrix}
		\cos\theta & -\sin\theta \\
		\sin\theta & \cos\theta 
	\end{bmatrix}
	\right\} \cup
	\left\{
	\begin{bmatrix}
		\cos\theta & \sin\theta \\
		\sin\theta & -\cos\theta 
	\end{bmatrix}
	\right\},
\end{equation}
where we leave it implicit that $\theta \in [-\pi,\pi]$ (e.g. \citealt{Baumeister_Hamilton_2015}).

Consider imposing the sign restrictions
\begin{equation}
	\mathbf{A}_{0}^{-1} = 
	\begin{bmatrix}
		+ & + \\
		- & +
	\end{bmatrix},
\end{equation}
implying that the first structural shock can be interpreted as a supply shock and the second as a demand shock. Augment the sign restrictions with an upper bound on the price elasticity of supply: $\omega(\bm{\phi},\mathbf{Q}) \leq \bar{\omega}$, where
\begin{equation}
    \omega(\bm{\phi},\mathbf{Q}) \equiv -\mathbf{e}_{1,2}'\mathbf{A}_{0}\mathbf{e}_{1,2}/\mathbf{e}_{1,2}'\mathbf{A}_{0}\mathbf{e}_{2,2} = -(\bm{\Sigma}_{tr}^{-1}\mathbf{e}_{1,2})'\mathbf{q}_1/(\bm{\Sigma}_{tr}^{-1}\mathbf{e}_{2,2})'\mathbf{q}_1.
\end{equation}

Assuming that $\sigma_{21} < 0$, the identified set for $\theta$ is
\begin{equation}
	IS_{\theta}(\bm{\phi}\mid S) = \left[\arctan\left(\frac{\sigma_{22}}{\sigma_{21}}\right),\mathrm{arccot}\left(\frac{\sigma_{21}}{\sigma_{22}}-\frac{\sigma_{11}}{\sigma_{22}}\bar{\omega}\right)\right].
\end{equation}
The upper bound of the set converges to zero as $\bar{\omega} \rightarrow \infty$ and to $\arctan(\sigma_{22}/\sigma_{21})$ (i.e. the lower bound) as $\bar{\omega} \rightarrow 0$. The elasticity restriction therefore provides a convenient way to explore the efficiency of our algorithm relative to accept-reject sampling as the size of the identified set changes.

To illustrate the sampler in this setting, we fix $\bm{\phi}$ and aim to draw from the uniform distribution over $\mathcal{Q}(\bm{\phi}\mid S)$, which is equivalent to drawing $\theta$ from a uniform distribution over $IS_{\theta}(\bm{\phi}\mid S)$ (\citealt{Baumeister_Hamilton_2015}). The accept-reject algorithm can be interpreted as drawing $\theta$ from a uniform distribution over $[-\pi,\pi]$ and rejecting draws of $\theta$ that violate the sign restrictions. In contrast, our approach uses the slice sampler to draw from the distribution over $\theta$ induced by $f_{\Delta}(\mathbf{Z})$. For $\Delta >0$, this distribution assigns positive density outside $IS_{\theta}(\bm{\phi}\mid S)$, so the slice sampler will return draws of $\theta$ outside of $IS_{\theta}(\bm{\phi}\mid S)$ with positive probability, though draws of $\theta$ within $IS_{\theta}(\bm{\phi}\mid S)$ will be sampled with higher probability. The importance-sampling step discards draws outside of $IS_{\theta}(\bm{\phi}\mid S)$ and resamples the remaining draws so that they are approximately uniformly distributed.

Figure~\ref{fig:illustration} illustrates the sampler under different values of $\Delta$. When $\Delta = 100$ (top left panel), which we take to approximate the behaviour of the algorithm as $\Delta \rightarrow \infty$, values of $\theta$ that violate the sign restrictions are essentially not penalised. The slice sampler therefore generates draws of $\mathbf{Q}$ from a uniform distribution over $\mathcal{O}(2)$, which corresponds to $\theta$ being uniformly distributed over the interval $[-\pi,\pi]$. When $\Delta = 0.1$ (top right panel), values of $\theta$ that violate the sign restrictions have their density penalised, but a substantial share of draws obtained via slice sampling violate the sign restrictions. Values of $\theta$ that satisfy the sign restrictions but are close to the bounds of the identified set also have their density penalised, so the distribution of draws satisfying the sign restrictions is not uniform. Decreasing $\Delta$ (bottom two panels) more strongly penalises values of $\theta$ that violate the sign restrictions, so a smaller proportion of draws violate the restrictions and the distribution of the draws that satisfy the restrictions is closer to uniform. Following importance sampling, in all cases the draws are approximately uniformly distributed over $IS_{\theta}(\bm{\phi})$.

\begin{figure}[h]
	\centering
	\caption{Illustration of Sampling Using Soft Sign Restrictions}
	\label{fig:illustration}
	\begin{tabular}{c c}
		\includegraphics[scale=0.5]{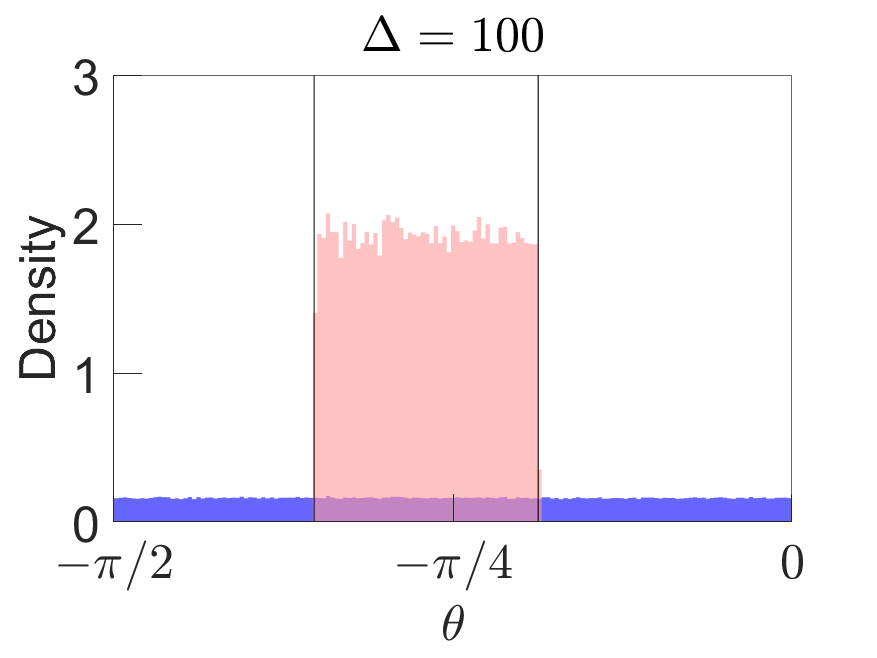}  &  \includegraphics[scale=0.5]{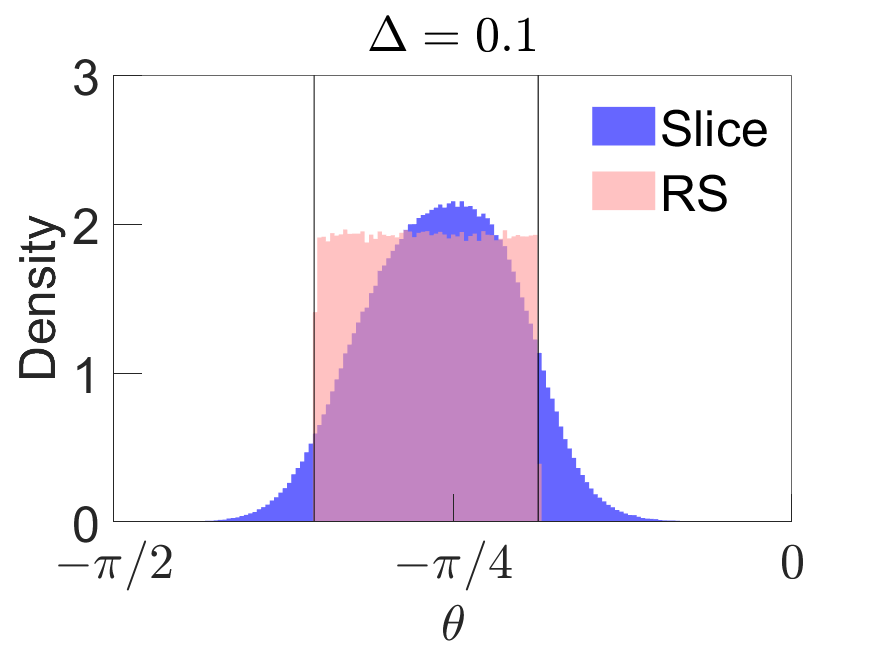} \\
		\includegraphics[scale=0.5]{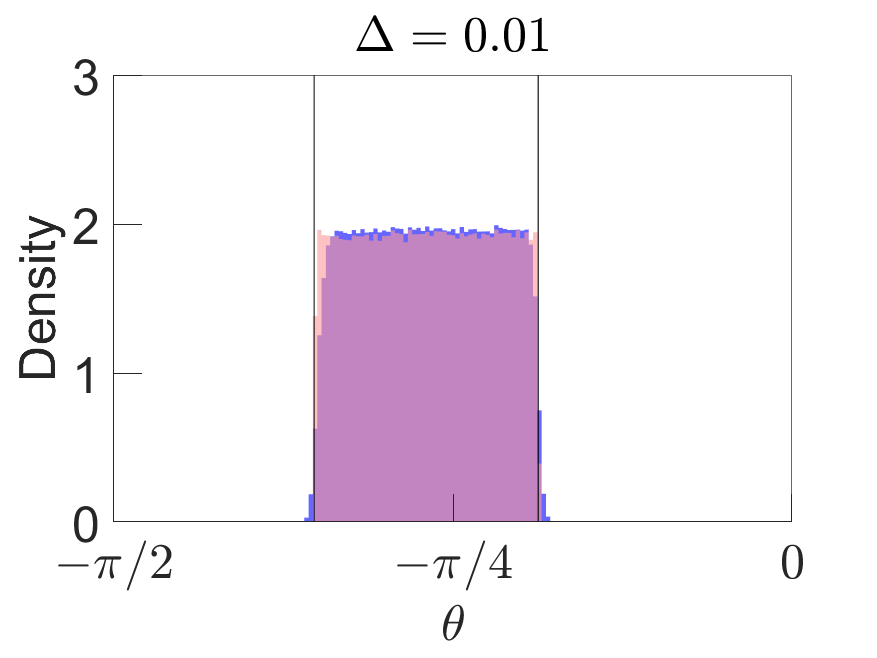}  &  \includegraphics[scale=0.5]{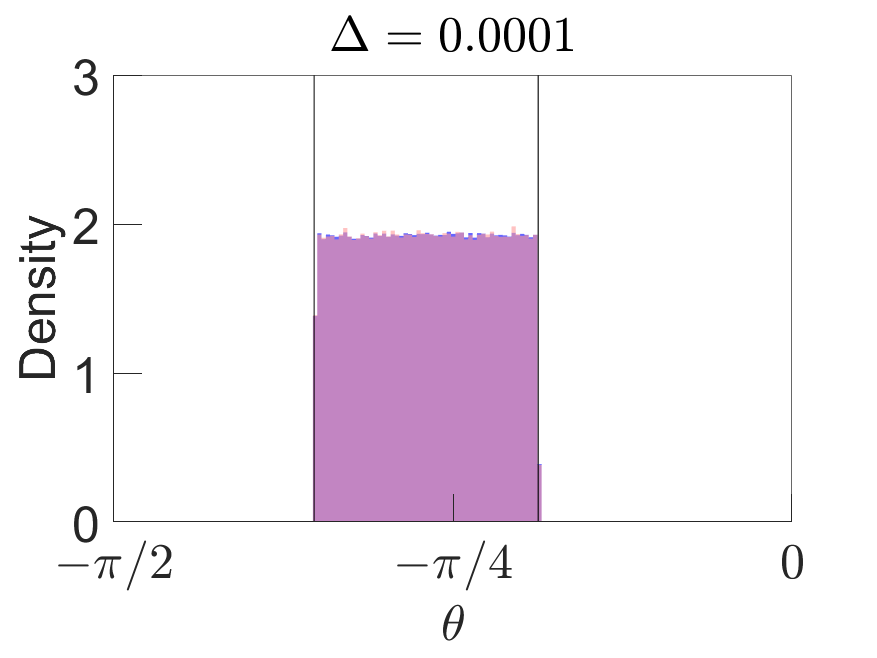} \\      
	\end{tabular}
    \noindent\begin{minipage}{\textwidth}
	\small\textbf{Notes:} `Slice' is distribution of draws obtained using slice sampler before importance sampling; `RS' is distribution after importance sampling. Vertical lines represent bounds of identified set. Parameter values are $\bm{\phi} = (\sigma_{11},\sigma_{21},\sigma_{22})' = (1,-0.5,1)'$ and $\bar{\omega} = 1$. Based on one~million draws.
     \end{minipage}   
\end{figure}

To examine the computational efficiency of the sampling algorithms in this simple setting, we obtain 10,000~draws from $\mathcal{Q}(\bm{\phi}\mid S)$ in 100~replications.\footnote{The sampler is initialised by numerically maximising the log target density with $\Delta_0 = 1000\times \Delta$, with an initial value for $\mathbf{Z}$ randomly drawn from the standard matrix normal distribution.} We compute the average time taken to obtain the draws and the average effective sample size.\footnote{If $w_{k}\equiv \tilde{w}_{\Delta}(\mathbf{Z}^{(k)})$ is the importance weight attached to the $k$th draw, the effective sample size (expressed as a percentage of the original number of draws) is\linebreak$ESS = (100/K)\times(\sum_{k=1}^{K}w_{k})^{2}/\sum_{k=1}^{K}w_{k}^{2}$.} To examine how performance varies with the tightness of identification and the choice of $\Delta$, we consider values of $\bar{\omega} \in \{1,0.1,0.01\}$ and $\Delta \in \{0.1,0.01,0.001,0.0001\}$ (Table~\ref{tab:performance2d}).\footnote{The results are obtained using MATLAB R2024a on a desktop computer running Microsoft Windows 10 Enterprise with an Intel Core i7-10700 CPU @ 2.90GHz, 8 cores and 128 GB RAM.}  

\begin{table}[h]
	\caption{Performance of Sampling Algorithms -- Bivariate Model} \label{tab:performance2d}
	\begin{tabularx}{\textwidth}{lccccccc}	\hline
		\textbf{Algorithm} & \multicolumn{3}{c}{\textbf{Speed (seconds)}} & ~~~~~~~~& \multicolumn{3}{c}{\textbf{Effective sample size (\%)}} \\ \cline{2-4} \cline{6-8}
		& $\bar{\omega} = 1$ & $\bar{\omega} = 0.1$ & $\bar{\omega} = 0.01$ & & $\bar{\omega} = 1$ & $\bar{\omega} = 0.1$ & $\bar{\omega} = 0.01$ \\ \hline
Accept-reject	&	0.77	&	5.15	&	49.32	& &	100.00	&	100.00	&	100.00	\\
Slice sampler: & & & & & & & \\
$\Delta = 0.1$	&	0.41	&	0.42	&	0.42	& &	78.36	&	22.32	&	2.17	\\
$\Delta = 0.01$	&	0.42	&	0.62	&	0.70	& &	96.52	&	80.62	&	22.72	\\
$\Delta = 0.001$	&	0.42	&	0.62	&	0.96	& &	99.65	&	97.27	&	80.92	\\
$\Delta = 0.0001$	&	0.41	&	0.62	&	0.96	& &	99.95	&	99.67	&	97.26	\\
\hline
	\end{tabularx}
    \par\vspace{0.5em}
\noindent\begin{minipage}{\textwidth}
\small\textbf{Notes:} Averages based on 100 Monte Carlo replications with 10,000 draws of $\mathbf{Q}$. $\bar{\omega}$ controls width of identified set. $\Delta$ controls penalisation of parameter values that violate (or are close to violating) sign restrictions in slice sampler.
\end{minipage}
\end{table}

In the cases considered, our sampler generates a larger number of effective draws per second. The difference in performance is less pronounced when $\bar{\omega} = 1$ and the identified set is relatively wide. As $\bar{\omega}$ decreases and the size of the identified shrinks, the relative performance of our approach improves. For example, when $\bar{\omega} = 0.01$, on average it takes the accept-reject sampler close to 50~seconds to generate 10,000 draws, whereas our sampler with $\Delta = 0.0001$ generates around 9,700 effective draws in under one~second (roughly 50~times as many effective draws per second).

\subsection{Disconnected identified set}
\label{subsec:disconnected}

In general, $\mathcal{Q}(\bm{\phi}\mid S)$ may consist of disconnected regions. Sampling from a distribution that is supported on disconnected parameter regions can pose challenges for MCMC algorithms, because the Markov chain may become stuck in one region and not adequately traverse the target distribution. In this exercise, we illustrate our sampling approach in a setting where the identified set is disconnected.

Consider imposing the restriction that $\eta_{120}(\bm{\phi},\mathbf{Q}) =  \mathbf{e}_{1,2}'\bm{\Sigma}_{tr}\mathbf{q}_{2} \geq \lambda$ for $0 \leq \lambda \leq \sigma_{11}$. All other impulse responses are unrestricted.\footnote{We continue to impose the sign normalisation $\mathrm{diag}(\mathbf{A}_0)=\mathrm{diag}(\mathbf{Q}'\bm{\Sigma}_{tr}^{-1}) \geq \mathbf{0}_{n\times 1}$. This example nests Example~B.5 in \citet{Giacomini_Kitagawa_2021sup} when $\lambda = 0$.} If $\sigma_{21} < 0$, the identified set for $\theta$ is
\begin{equation}
	\begin{split}
		\begin{array} {rcl}
	IS_{\theta}(\bm{\phi}\mid S) = \left[\arctan\left(\frac{\sigma_{22}}{\sigma_{21}}\right),\arcsin\left(-\frac{\lambda}{\sigma_{11}}\right)\right] \cup \\
	\left[\frac{\pi}{2}, \min\left\{\pi - \arcsin\left(\frac{\lambda}{\sigma_{11}}\right),\pi + \arctan\left(\frac{\sigma_{22}}{\sigma_{21}}\right)\right\} \right],
		\end{array}
	\end{split}
\end{equation}
which is the union of two disconnected intervals.

\begin{figure}[h]
	\centering
	\caption{Illustration of Sampling Using Soft Sign Restrictions -- Disconnected Identified Set}
	\label{fig:illustration_disc}
	\begin{tabular}{c c}
		\includegraphics[scale=0.5]{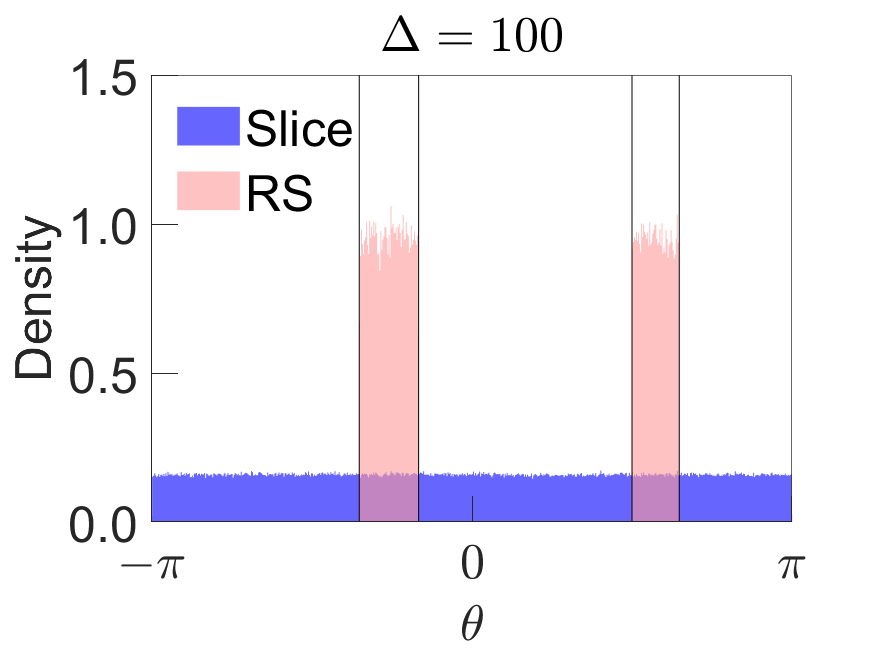}  &  \includegraphics[scale=0.5]{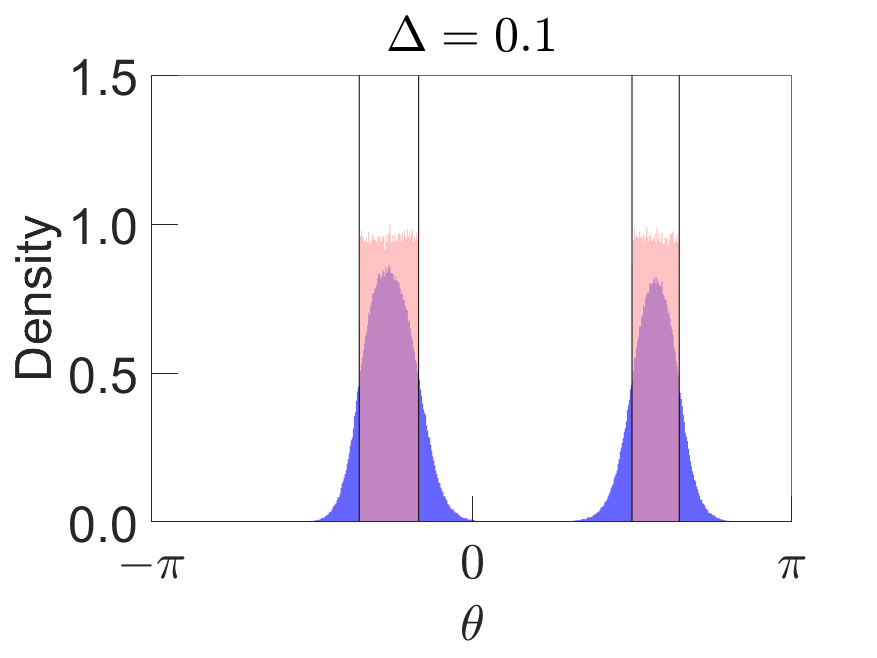} \\
		\includegraphics[scale=0.5]{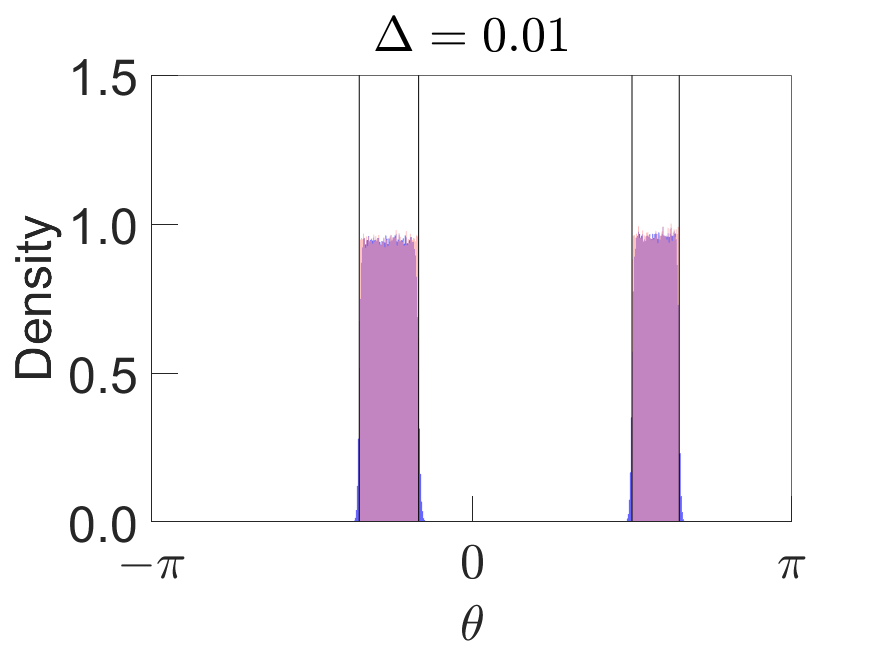}  &  \includegraphics[scale=0.5]{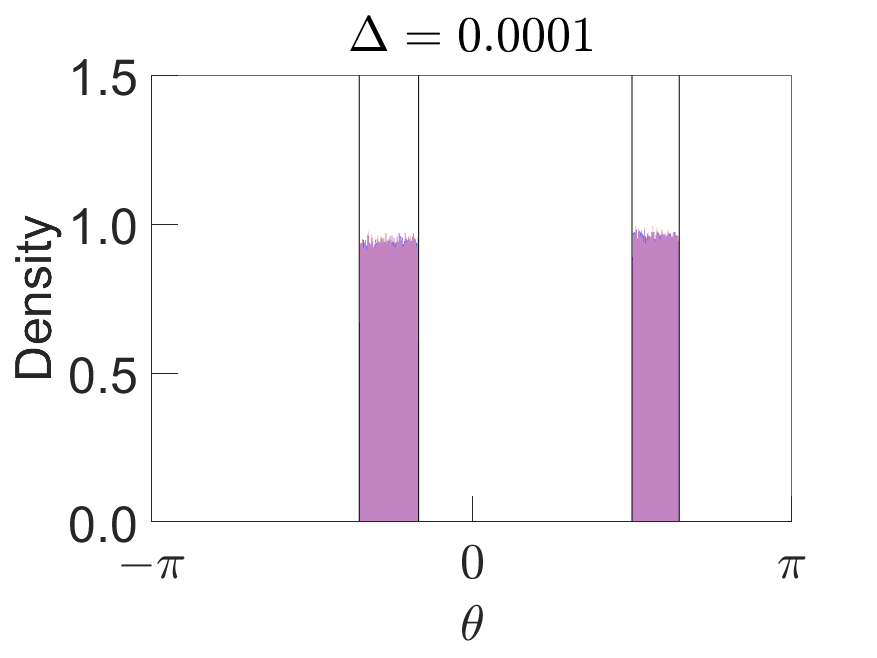} \\      
	\end{tabular}
    \noindent\begin{minipage}{\textwidth}
	\small\textbf{Notes:} `Slice' is distribution of draws obtained using slice sampler before importance sampling; `RS' is distribution after importance sampling. Vertical lines represent bounds of identified set. Parameter values are $\bm{\phi} = (\sigma_{11},\sigma_{21},\sigma_{22})' = (1,-0.5,1)'$ and $\lambda = 0.5$. Based on one~million draws.
    \end{minipage}
\end{figure}

Figure~\ref{fig:illustration_disc} illustrates the sampler under different values of $\Delta$. Even at small values of $\Delta$, the sampler continues to cover the identified set -- and thus generate draws from the target distribution -- despite the identified set being disconnected. In the case where $\Delta = 0.0001$, 55.2~per cent of draws lie within the first interval, which is close to the theoretical probability under the uniform distribution (55.7~per cent). Our sampler therefore appears to adequately mix across the two regions.

The examples here and in the previous section point to the potential for our approach to improve the computational efficiency of posterior sampling under sign restrictions when the restrictions substantially truncate the identified set, including when the identified set is disconnected. To assess whether the approach can deliver on this promise, in the next section we turn to a realistic empirical application.

\section{Empirical Application: Shocks in the Oil Market}
\label{sec:empirical}
To examine the performance of the algorithms empirically, we consider the model of the oil market in \citet{Antolin-Diaz_Rubio-Ramirez_2018} (henceforth, ARR18). This model involves a rich set of identifying restrictions -- sign restrictions, `elasticity' restrictions and narrative restrictions -- that (nonlinearly) constrains all columns of $\mathbf{Q}$. We compare the performance of our sampler against accept-reject sampling when conducting standard Bayesian inference under a uniform prior. We also demonstrate the utility of our sampler when conducting prior-robust Bayesian inference.

\subsection{Model and identifying restrictions}
\label{subsec:model}

The model's endogenous variables are an index of real economic activity ($REA_t$), the growth rate of global oil production ($PROD_t$) and the log of the real price of oil ($RPO_t$). The VAR includes 24~lags and a constant, and is estimated on monthly data from January 1971 to December 2015.\footnote{The data were obtained from the replication files to ARR18.} The reduced-form prior is a diffuse normal-inverse-Wishart distribution, so the posterior is also normal-inverse-Wishart (e.g. \citealt{DelNegro_Schorfheide_2011}).

Let $\mathbf{y}_{t} = (REA_t, PROD_t, RPO_t)'$. The following sign restrictions are imposed on the impact impulse responses:
\begin{equation}\label{eq:signrestrictionsoil}
	\mathbf{A}_{0}^{-1} = 
	\begin{bmatrix}
		+ & - & - \\
		+ & + & - \\
		+ & + & +
	\end{bmatrix}.
\end{equation}
These restrictions imply that the model's three structural shocks can be interpreted as shocks to aggregate demand, oil-specific demand and oil supply, respectively. We also impose the restrictions that $\mathrm{diag}(\mathbf{A}_{0}) = \mathrm{diag}(\mathbf{Q}'\bm{\Sigma}_{tr}^{-1}) \geq \mathbf{0}_{s\times 1}$, which can be viewed as a normalisation on the signs of the structural shocks.

The `elasticity' restrictions restrict the `price elasticity of oil supply' to be less than 0.0258, which \citet{Kilian_Murphy_2012} argue is a credible upper bound based on existing evidence. This `elasticity' is defined as the ratio of the impact response of $PROD_t$ to the impact response of $RPO_t$ following aggregate demand or oil-specific demand shocks, so the restrictions are:\footnote{\cite{Baumeister_Hamilton_2024} demonstrate that ratios of impulse responses cannot be interpreted as structural elasticities; instead, structural elasticities are given by ratios of elements of $\mathbf{A}_{0}$. We impose the same restrictions as in ARR18, who in turn follow \citet{Kilian_Murphy_2012}, to maintain comparability.}
\begin{align}
	\frac{\mathbf{e}_{2,3}'\mathbf{A}_{0}^{-1}\mathbf{e}_{1,3}}{\mathbf{e}_{3,3}'\mathbf{A}_{0}^{-1}\mathbf{e}_{1,3}} &= \frac{\mathbf{e}_{2,3}'\bm{\Sigma}_{tr}\mathbf{q}_{1}}{\mathbf{e}_{3,3}'\bm{\Sigma}_{tr}\mathbf{q}_{1}} \leq 0.0258 \\
	\frac{\mathbf{e}_{2,3}'\mathbf{A}_{0}^{-1}\mathbf{e}_{2,3}}{\mathbf{e}_{3,3}'\mathbf{A}_{0}^{-1}\mathbf{e}_{2,3}} &= \frac{\mathbf{e}_{2,3}'\bm{\Sigma}_{tr}\mathbf{q}_{2}}{\mathbf{e}_{3,3}'\bm{\Sigma}_{tr}\mathbf{q}_{2}} \leq 0.0258.
\end{align}

The narrative restrictions include restrictions on the signs of the structural shocks and their contributions to one-step-ahead forecast errors in selected episodes (i.e.~historical decompositions). The shock-sign restrictions are that the oil supply shock was nonnegative in December 1978, January 1979, September 1980, October 1980, August 1990, December 2002, March 2003 and February 2011, which are months in which narrative accounts suggest that there were unexpected disruptions in oil production.\footnote{ARR18 state that the oil supply shock is negative in these periods, which reflects a convention of referring to supply shocks that lower production as negative. However, given the sign restrictions in Equation~(\ref{eq:signrestrictionsoil}), a \textit{positive} supply shock results in $PROD_t$ decreasing and $RPO_t$ increasing. Hence, although the language that we use to describe the sign of the shock differs, the economic content of the restriction is the same.} These restrictions require that
\begin{equation}
	\varepsilon_{3t} = \mathbf{e}_{3,3}'\mathbf{A}_{0}\mathbf{u}_{t} = (\bm{\Sigma}_{tr}^{-1}\mathbf{u}_{t})\mathbf{q}_{3} \geq 0
\end{equation}
for values of $t$ corresponding to the dates above. The historical-decomposition restrictions include the restriction that the oil supply shock was the `most important contributor' to the observed unexpected movement in $PROD_t$ in these months (i.e. $|H_{2,3,t,t}| \geq \max_{j \neq 3} |H_{2,j,t,t}|$). Finally, in September 1980, October 1980 and August 1990, aggregate demand shocks are required to be the `least important contributor' to the unexpected movement in $RPO_t$ (i.e. $|H_{3,1,t,t}| \leq \min_{j \neq 1} |H_{3,j,t,t}|$).

\subsection{Standard Bayesian inference}
\label{subsec:empiricalstandard}

The goal of our first exercise is to obtain 1,000 draws of $\bm{\phi}$ from its posterior (such that the identified set is nonempty) and $M=$ 1,000 draws of $\mathbf{Q}$ from the uniform distribution over $\mathcal{Q}(\bm{\phi}\mid S)$ at each draw of $\bm{\phi}$, yielding $10^6$ draws of the impulse responses. We do this using accept-reject and our sampler, embedding both samplers in a standard posterior sampler for $\bm{\phi}$.

In implementing the accept-reject sampler in this application, we introduce an additional `sign normalisation' step into Algorithm~\ref{algo:acceptreject} to mechanically impose the restriction  $\mathrm{diag}(\mathbf{A}_{0}) = \mathrm{diag}(\mathbf{Q}'\bm{\Sigma}_{tr}^{-1}) \geq \mathbf{0}_{s\times 1}$.\footnote{Given a draw of $\mathbf{Q}=(\mathbf{q}_1,\ldots,\mathbf{q}_n)$ obtained via Step~1) of Algorithm~\ref{algo:acceptreject}, the sign-normalised draw of $\mathbf{Q}$ has $j$th column $\mathrm{sign}((\bm{\Sigma}_{tr}^{-1}\mathbf{e}_{j,n})'\mathbf{q}_j)\mathbf{q}_j$.} As discussed in \cite{Rubio-Ramirez_Waggoner_Zha_2010}, enforcing these restrictions mechanically (rather than via accept-reject) increases the efficiency of the sampler by a factor of $2^n$. This gives the accept-reject sampler a better chance of outperforming our sampler.

As discussed in Section~\ref{sec:emptyset}, $\mathcal{Q}(\bm{\phi}\mid S)$ may be empty. When using the accept-reject sampler, we make 1,000 unsuccessful attempts to draw $\mathbf{Q}$ before treating $\mathcal{Q}(\bm{\phi}\mid S)$ as empty and redrawing $\bm{\phi}$. When using the slice sampler, if none of the 1,000 draws of $\mathbf{Q}$ satisfy the identifying restrictions, we redraw $\bm{\phi}$. Similar effort is therefore used to determine whether the identified set is nonempty under both approaches. We set $\Delta = 10^{-5}$ when using our sampler and examine alternative choices below. At each draw of $\bm{\phi}$, we initialise the slice sampler by numerically maximising the log target density with $\Delta= 0.1$.

\begin{figure}[h]
	\centering
	\caption{Impulse Responses to Oil Market Shocks -- Standard Bayesian Inference}
	\label{fig:impulseresponses}	
		\includegraphics[scale=1]{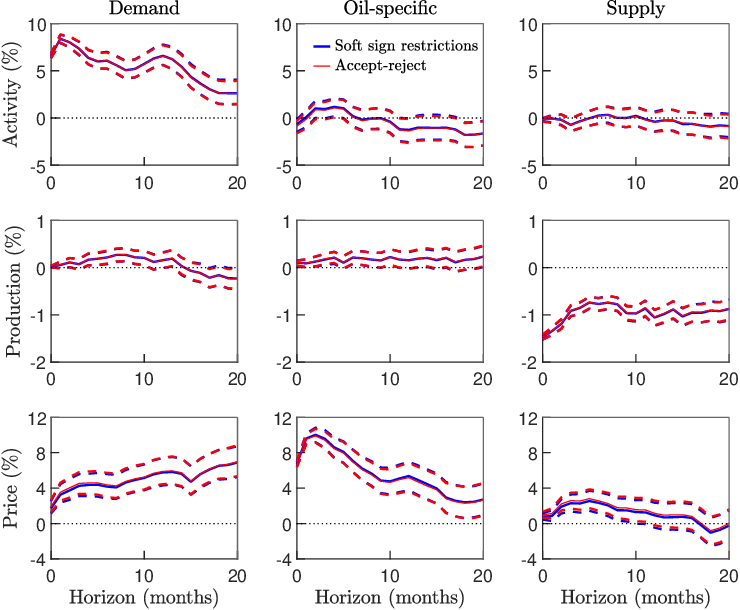}  
    \noindent\begin{minipage}{\textwidth}
	\small\textbf{Notes:} Solid lines are posterior medians and dashed lines are 68~per cent credible intervals.
    \end{minipage}
\end{figure}

Figure~\ref{fig:impulseresponses} summarises the posterior distributions of the impulse responses obtained using the two samplers. The results are very similar.\footnote{The results are also consistent with those in ARR18, despite some differences in the details of the exercise. The results are not directly comparable for two main reasons. First, we use a conditionally uniform prior for $\mathbf{Q}$. Second, following the recommendation in \citet{Giacomini_Kitagawa_Read_2023}, we construct the posterior distribution using the unconditional likelihood rather than the conditional likelihood; this means that the importance-sampling step in ARR18 -- which reweights posterior draws based on the \textit{ex ante} probability that the shocks satisfy the narrative restrictions -- is unnecessary.} The accept-reject sampler takes around 90~hours to generate the desired number of draws from the posterior, whereas our sampler takes only 3~hours. The effective sample size (as a percentage of $M$) from our sampler is around 90~per cent. To adjust for the difference in effective sample size, we compare the number of effective draws per hour. The accept-reject sampler generates approximately 11,000~draws per hour, whereas our sampler generates around 300,000~effective draws per hour. On this basis, our approach is an order of magnitude more efficient than accept-reject sampling, generating roughly 30~times as many effective draws per unit of time.

As discussed in Section~\ref{subsec:softsign}, our use of a self-normalised importance sampler means that there is a finite-sample bias, which vanishes as $M\rightarrow \infty $ (for fixed $\Delta$) or as $\Delta \rightarrow 0$ (for fixed $M$). More generally, the quality of the posterior approximation will also depend on the ability of the slice sampler to explore $f_{\Delta}(\mathbf{Z})$. While Figure~\ref{fig:impulseresponses} suggests that the posterior approximations generated by the samplers are very similar, the computational burden of accept-reject sampling means it will not be practical in general to compare posterior approximations across the two samplers. Moreover, the posterior approximation obtained from the accept-reject sampler may itself be inaccurate due to the necessity of treating the identified set as empty if no draws of $\mathbf{Q}$ satisfy the sign restrictions; intuitively, as demonstrated further below, the accept-reject sampler under-weights values of $\bm{\phi}$ with small $p(\bm{\phi}\mid S)$. 

As a check on the quality of posterior approximations, we therefore recommend comparing the \emph{conditional} posterior distributions of particular parameters of interest across the two samplers. To give an example, Figure~\ref{fig:posteriorcomp} compares the posterior of the impact impulse responses conditional on the posterior mean of $\bm{\phi}$. The two approximations are again very similar.

\begin{figure}
\centering
	\caption{Comparison of Conditional Posterior Approximations Across Samplers}
	\label{fig:posteriorcomp}
	\begin{tabularx}{\textwidth}{X X X}
		\includegraphics[width=\linewidth]{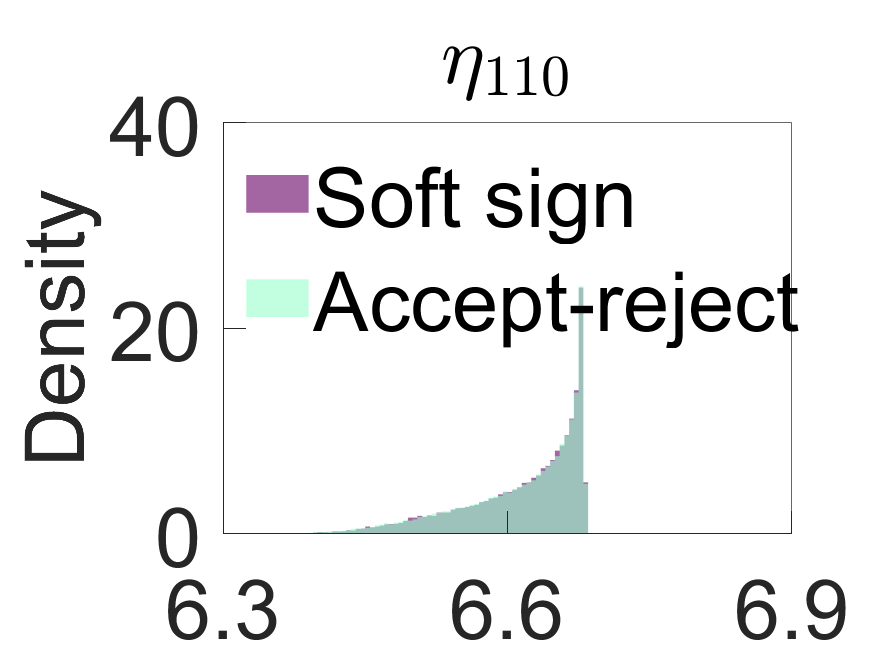}  &   \includegraphics[width=\linewidth]{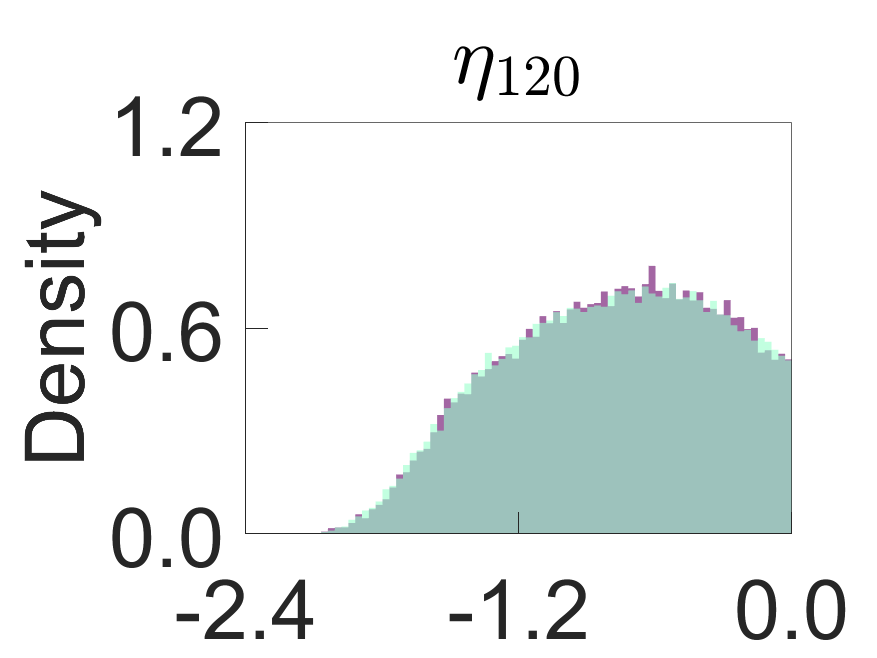}  & \includegraphics[width=\linewidth]{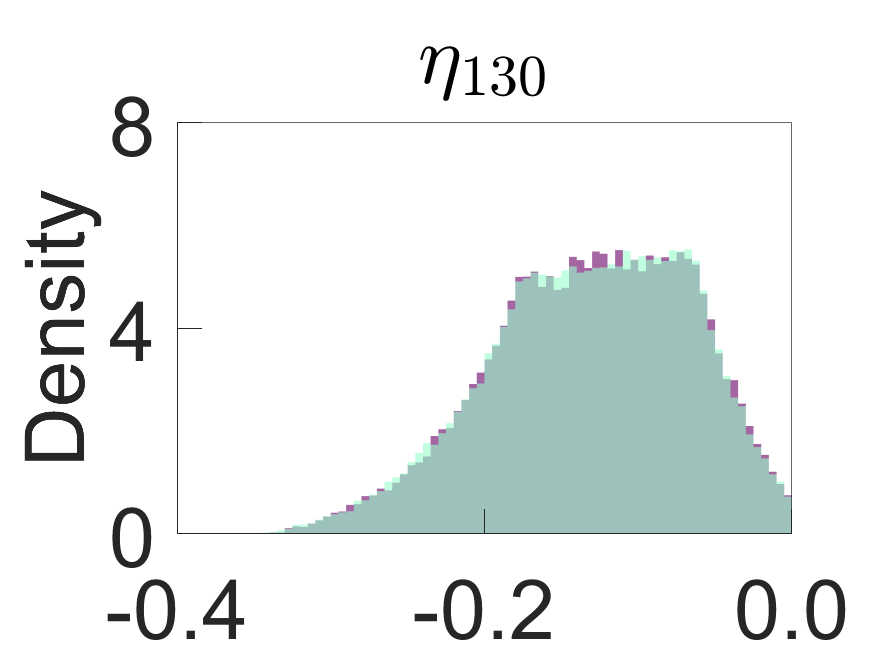}  \\
        \includegraphics[width=\linewidth]{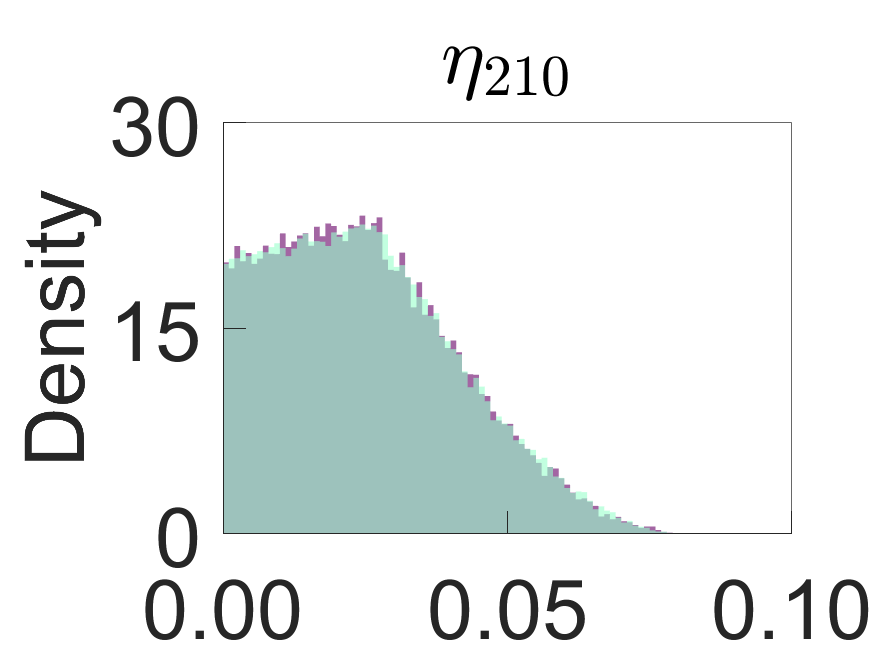}  &   \includegraphics[width=\linewidth]{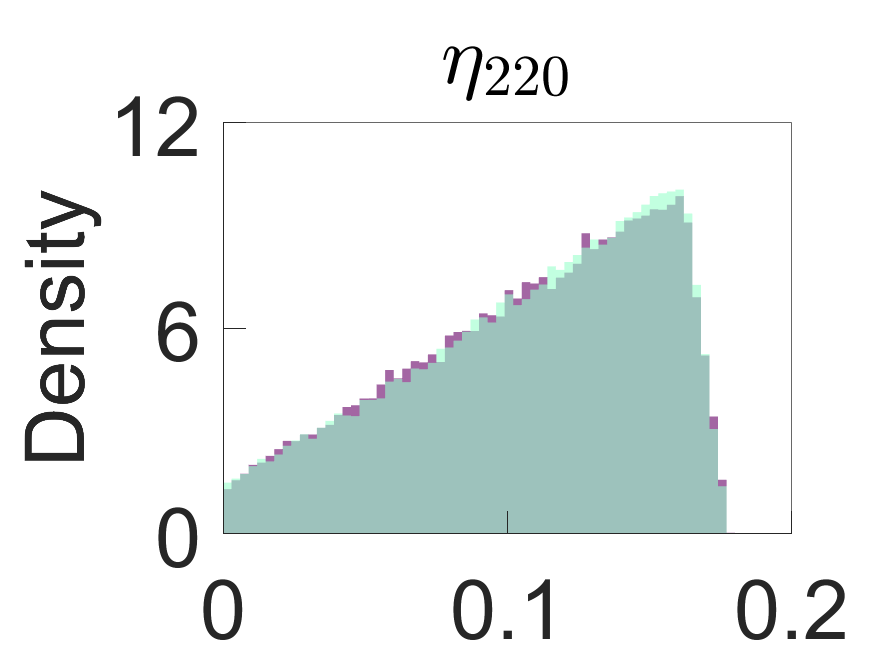}  & \includegraphics[width=\linewidth]{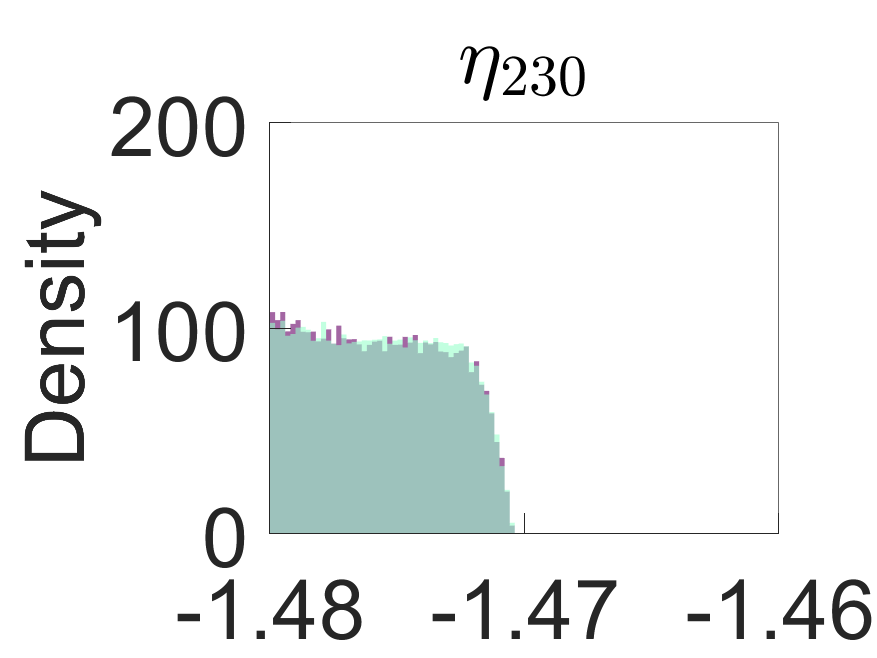}  \\
        \includegraphics[width=\linewidth]{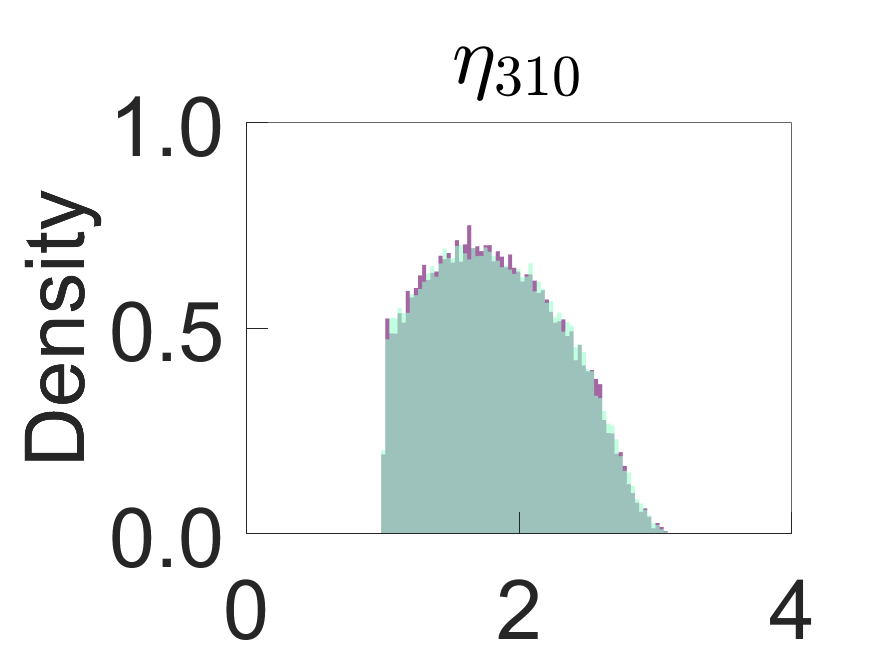}  &   \includegraphics[width=\linewidth]{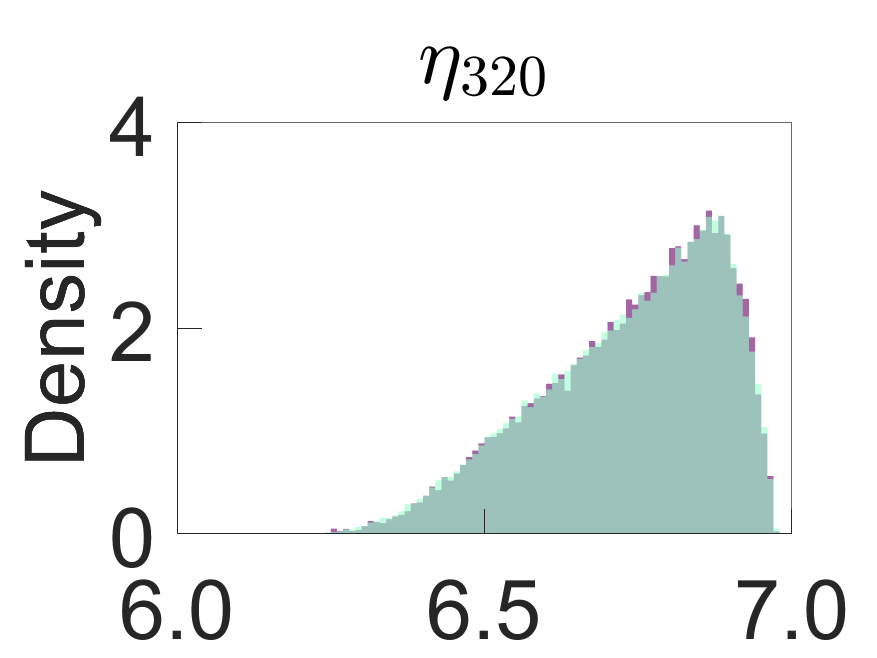}  & \includegraphics[width=\linewidth]{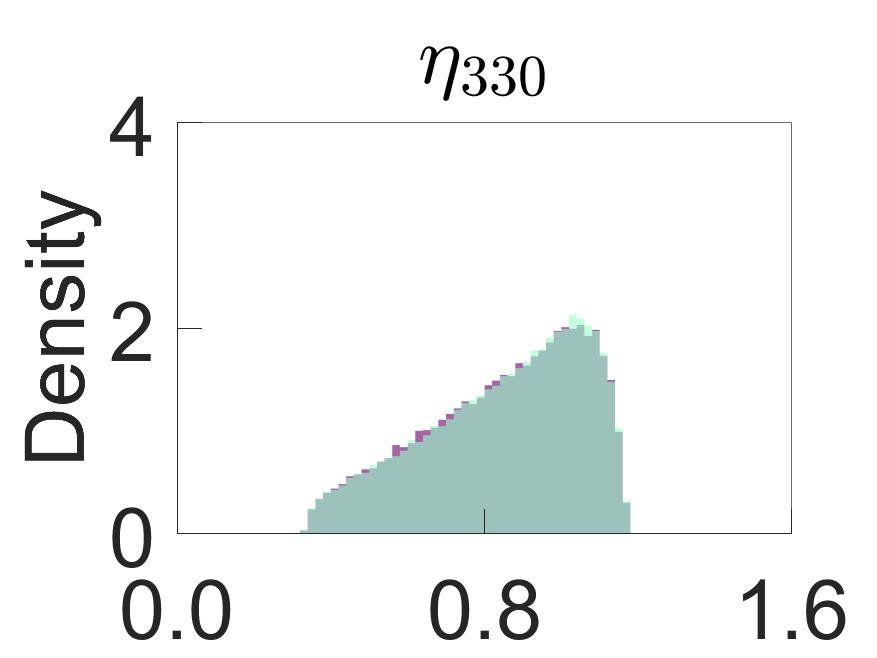}  \\
	\end{tabularx}
    \noindent\begin{minipage}{\textwidth}
	\small\textbf{Notes:} Based on 100,000~draws of $\mathbf{Q}$; conditional on posterior mean of $\bm{\phi}$.
     \end{minipage}  
 \end{figure}

The results in this section have been obtained setting the regularisation parameter $\Delta=10^{-5}$. Table~\ref{tab:performance_oil} compares the computational performance of our sampler at different choices of $\Delta$. For all values of $\Delta$ considered, our sampler generates a larger number of effective draws per hour than the accept-reject sampler. The number of effective draws per hour is broadly similar across a large range of values for $\Delta$ spanning from $\Delta = 10^{-3}$ to $\Delta = 10^{-5}$. Different choices of $\Delta$ also generate quantitatively similar posterior approximations; see the Supplemental Appendix for further details.

\begin{table}
	\caption{Performance of Sampling Algorithms -- Oil Market Model} \label{tab:performance_oil}
	\begin{tabularx}{\textwidth}{X S[table-format=3.1] S[table-format=3.0]}
		\hline
		Algorithm & \multicolumn{1}{c}{Speed (hours)} & \multicolumn{1}{c}{Effective draws per hour ('000)} \\ \hline 
Accept-reject	&	89.9	&	11	\\
Soft sign restrictions:	&		&		\\
$\quad \Delta = 10^{-1}$	&	1.1	&	50	\\
$\quad \Delta = 10^{-2}$	&	1.5	&	270	\\
$\quad \Delta = 10^{-3}$	&	1.9	&	397	\\
$\quad \Delta = 10^{-4}$	&	2.5	&	350	\\
$\quad \Delta = 10^{-5}$	&	2.9	&	310	\\
$\quad \Delta = 10^{-6}$	&	3.2	&	279	\\
     \hline
	\end{tabularx}
     \par\vspace{0.5em}
    \noindent\begin{minipage}{\textwidth}
    \small\textbf{Notes:} $\Delta$ controls penalisation of parameter values that violate sign restrictions. Effective draws per hour rounded to nearest thousand.
    \end{minipage}
\end{table}

\subsection{Robust Bayesian inference}
\label{subsec:empiricalrobust}

In this section we demonstrate the utility of our sampling strategy in implementing the prior-robust Bayesian procedure from \citet{Giacomini_Kitagawa_2021}. This procedure can be used to assess the sensitivity of posterior inferences to the choice of conditional prior for $\mathbf{Q}$ given $\bm{\phi}$, which is not revisable because $\mathbf{Q}$ is set identified.\footnote{In some applications of set-identified SVARs, robust Bayesian methods have revealed that much of the apparent information in the standard Bayesian posterior is contributed by the conditional prior for $\mathbf{Q}$ (e.g. \citealt{Giacomini_Kitagawa_2021,Giacomini_Kitagawa_Read_2022b,Giacomini_Kitagawa_Read_2023,Giacomini_Kitagawa_Read_2026,Read_2024}), though this is not necessarily always the case, particularly when rich sets of identifying restrictions are imposed (e.g. \citealt{Inoue_Kilian_2026}).}

The prior-robust procedure replaces the conditional prior for $\mathbf{Q}$ with the class of all conditional priors that assign probability one to $\mathcal{Q}(\bm{\phi}\mid S)$. Summarising the corresponding class of posteriors requires computing the bounds of the identified set for each parameter of interest at each posterior draw of $\bm{\phi}$. \citet{Giacomini_Kitagawa_2021} suggest approximating the bounds by computing the minimum and maximum of the parameter of interest over many draws of $\mathbf{Q}$ obtained via accept-reject sampling.\footnote{An alternative approach is to obtain the bounds by solving a numerical optimisation problem using, for example, gradient-based methods, but this can be computationally burdensome and convergence to the true bounds is not guaranteed (e.g. \citealt{Amir-Ahmadi_Drautzburg_2021,Giacomini_Kitagawa_2021,Montiel-Olea_Nesbit_2021}).} Given $M$ draws from inside $\mathcal{Q}(\bm{\phi}|S)$, the bounds will be approximated with error that vanishes as $M\rightarrow \infty$. 

\citet{Montiel-Olea_Nesbit_2021} derive theoretical results about the number of draws required to approximate identified sets with a specified degree of accuracy. Ensuring misclassification error less than 5~per cent with probability at least 95~per cent requires over 20,000 draws from $\mathcal{Q}(\bm{\phi}\mid S)$ at each draw of $\bm{\phi}$. Using accept-reject to implement the robust Bayesian procedure with this target level of accuracy would be extremely computationally costly. We therefore turn to our sampler based on soft sign restrictions. See the Supplemental Appendix for details about how we implement our approach in this exercise.

Figure~\ref{fig:priorrobust} summarises the class of posteriors using the `set of posterior medians' and `robust credible interval'. For each impulse response, the set of posterior medians is an interval with lower (upper) bound equal to the posterior median of the lower (upper) bound of the identified set; this interval contains all posterior medians obtainable from the class of posteriors and can be interpreted as an estimator of the identified set.\footnote{If the identified set is nonconvex, the set of posterior medians can be interpreted as an estimator of its convex hull.} The $\alpha$-level robust credible interval is an interval that is assigned at least posterior probability $\alpha$ uniformly under the class of posteriors. 

\begin{figure}[h]
	\centering
	\caption{Impulse Responses to Oil Market Shocks -- Comparison of Standard and Robust Bayesian Inference}
	\label{fig:priorrobust}
		\includegraphics[scale=1]{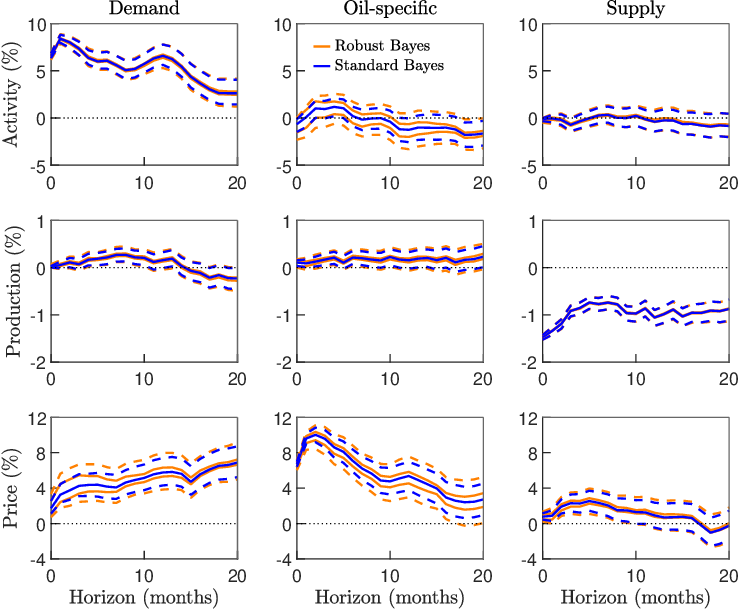}
        \par\vspace{0.5em}
    \noindent\begin{minipage}{\textwidth}
	\small\textbf{Notes:} Solid lines are (sets of) posterior medians and dashed lines are 68~per cent (robust) credible intervals. Results obtained using soft sign restrictions.
    \end{minipage}
\end{figure}

The influence of the conditional prior on the posterior varies across different impulse responses. For the responses to an oil supply shock, the set of posterior medians tends to be narrow, and the robust credible intervals are similar to the standard credible intervals; the responses to an oil supply shock are tightly identified and the conditional prior has little influence on posterior inferences about these responses. The responses of oil prices to demand-side shocks and the response of activity to an oil-specific demand shock are less-tightly identified and the conditional prior contributes more of the apparent information in the posterior.\footnote{To quantify the influence of the conditional prior, the Supplemental Appendix reports the `prior informativeness' statistic from \citet{Giacomini_Kitagawa_2021}. The prior informativeness statistic is smaller for the responses to oil supply shocks than for other shocks.} Nevertheless, even in these cases, the robust credible intervals tend to exclude zero in the same cases where the standard credible intervals exclude zero. 

Overall, this exercise suggests that inferences about the effects of shocks in the oil market obtained under this rich set of identifying restrictions are insensitive to the choice of conditional prior. Importantly, the large number of draws of $\mathbf{Q}$ used to approximate identified sets and the guarantee on approximation accuracy from \citet{Montiel-Olea_Nesbit_2021} mean that this apparent robustness is unlikely to be an artifact of approximation error. These results complement exercises in \citet{Inoue_Kilian_2026}, who highlight applications where posterior inferences do not appear to be driven by the uniform prior for $\mathbf{Q}$.

\subsection{Empty identified sets}
\label{subsec:empiricalempty}

As discussed in Section~\ref{sec:emptyset}, $\mathcal{Q}(\bm{\phi}\mid S)$ may be empty. This section examines the ability of the two samplers to determine whether $\mathcal{Q}(\bm{\phi}\mid S)$ is nonempty. To do this, we compute the `posterior plausibility' of the identifying restrictions, which is the posterior probability (under the posterior for $\bm{\phi}$) that $\mathcal{Q}(\bm{\phi}\mid S)$ is nonempty; the posterior plausibility can be used to quantify the compatibility of the identifying restrictions with the joint distribution of the data, which is indexed by $\bm{\phi}$ (\citealt{Giacomini_Kitagawa_2021,Giacomini_Kitagawa_Read_2022a}).\footnote{The prior and posterior plausibility are also inputs into Bayes factors for comparing models with different set-identifying restrictions (\citealt{Amir-Ahmadi_Drautzburg_2021,Giacomini_Kitagawa_Volpicella_2022}).}

In Section~\ref{subsec:empiricalstandard}, we treat $\mathcal{Q}(\bm{\phi}\mid S)$ as empty at a draw of $\bm{\phi}$ when none of the $M=1,000$ candidate draws of $\mathbf{Q}$ satisfy the identifying restrictions. Based on the accept-reject sampler, the posterior plausibility is less than 1~per cent, suggesting that the restrictions are incompatible with the data. In contrast, the posterior plausibility is around 22~per cent when using our sampler. Our sampler is better able to determine when $\mathcal{Q}(\bm{\phi}\mid S)$ is nonempty, because it tends to drift towards the identified set. An implication of this difference is that the posterior approximation obtained using the accept-reject sampler tends to underweight values of $\bm{\phi}$ with small $p(\bm{\phi}\mid S)$.\footnote{In the robust Bayesian exercise of Section~\ref{subsec:empiricalrobust}, where we use around 23,000 draws of $\mathbf{Q}$, the posterior plausibility increases to around 26~per cent.} 

\subsection{Additional application: US monetary policy shocks}
\label{subsec:empiricalmonetarypolicy}

Supplemental Appendix~C documents the performance of our sampler using the model of US monetary policy from ARR18. The model contains six variables and identifies a monetary policy shock using an extensive set of sign restrictions on impulse responses and narrative restrictions related to eight historical episodes. Obtaining 1,000 draws of $\mathbf{Q}$ at 1,000~draws of $\bm{\phi}$ takes 21.8~hours using the accept-reject sampler, but only 1.1~hours using our approach. The average effective sample size from our sampler is about 83~per cent, and our approach generates around 780,000~effective draws per hour, compared with 45,000~effective draws per hour for the accept-reject sampler. The two samplers also generate similar posterior approximations. We conclude that our approach continues to perform favourably in a larger model.

\section{Conclusion}
\label{sec:conclusion}

We develop an approach to posterior sampling in sign-restricted SVARs under a uniform prior for the orthonormal matrix. This approach can also be used when conducting prior-robust Bayesian inference. The approach samples from a target density that smoothly penalises parameter values violating the identifying restrictions, which allows us to apply MCMC methods, before applying an importance sampling step to obtain draws from the hard-truncated target density. Our approach is broadly applicable under a wide range of identifying restrictions, including elasticity and narrative restrictions. We provide evidence that our approach is more computationally efficient than brute-force accept-reject sampling when the identified set for the orthonormal matrix is assigned small measure under the uniform prior. It is therefore particularly useful when rich sets of identifying restrictions are imposed.

A number of potentially fruitful avenues for further work could build on our approach. It may be possible to improve the efficiency of our approach by using alternative MCMC samplers that exploit local information about the shape of the smoothed target density (e.g.~gradients); the regularised constraints that we use allow us in principle to construct such approximations.\footnote{Extensions of the slice sampler can use local information about the shape of the target density, such as by using local quadratic approximations (\citealt{Neal_2003}). } It is likely that the use of alternative MCMC samplers would be necessary in large models, since the slice sampler can become inefficient in high dimensions. A promising candidate is the elliptical slice sampler, proposed in \cite{Murray_Adams_MacKay_2010}; \cite{Arias_Rubio-Ramirez_Shin_2026} use an elliptical slice sampler to implement their Gibbs sampler in relatively large SVARs. It would be valuable to extend our approach to allow for zero restrictions (e.g. \nocite{Arias_Rubio-Ramirez_Waggoner_2018}Arias \emph{et al} 2018; \nocite{Giacomini_Kitagawa_2021}Giacomini and Kitagawa 2021a; \nocite{Read_2022}Read 2022), which would further broaden its applicability. Soft sign restrictions could also be used within Sequential Monte Carlo (e.g. \citealt{Herbst_Schorfheide_2015}) by sampling from a sequence of target densities with decreasing $\Delta$. It may also be useful to extend our sampling strategy to sign-restricted SVARs where a prior is specified directly over the structural parameters (e.g. \citealt{Baumeister_Hamilton_2015}). The idea of smoothing hard constraints on the parameter space could also be beneficial when conducting inference in other set-identified models where parameter regions are truncated by inequality restrictions.

\appendix
\renewcommand{\theequation}{A\arabic{equation}}
\renewcommand{\thesection}{A}
\renewcommand{\thefigure}{A\arabic{figure}}
\setcounter{equation}{0}
\setcounter{figure}{0}
\section{Appendix}
\label{app:proofs}
\subsection{Proof of Proposition~\ref{prop:conv}}

\begin{proof}
	The assumption on $T$ ensures that $\mathbb{E}_f[T(\mathbf{Z})]$ and $\mathbb{E}_\Delta[T(\mathbf{Z})]$ exist. We can write
	\begin{equation}\label{eq:expectationinequality}
		\left|\mathbb{E}_f[T(\mathbf{Z})]-\mathbb{E}_\Delta[T(\mathbf{Z})]\right|\leq \int |T(\mathbf{Z})|\left\vert f(\mathbf{Z} | Q(\mathbf{Z}) \in \mathcal{Q}(\bm{\phi}\mid S))- f_{\Delta}(\mathbf{Z})\right\vert d\mathbf{Z}.
	\end{equation}
	Denote the normalising constants for $f(\mathbf{Z} \mid  Q(\mathbf{Z}) \in \mathcal{Q}(\bm{\phi}\mid S))$ and $f_{\Delta}(\mathbf{Z})$ by $C_f$ and $C_\Delta$, respectively. Without loss of generality, assume $s=1$ so there is a single sign restriction $S(\bm{\phi},\mathbf{Q}) \geq 0$. The right-hand side of Equation~(\ref{eq:expectationinequality}) can be written as
	\begin{equation}\label{eq:expectationinequality2}
		\begin{split}
			\begin{array} {rcl}
		C_f^{-1}\int |T(\mathbf{Z})|f_Z(\mathbf{Z})\left\vert \mathbbm{1}(S(\bm{\phi},Q(\mathbf{Z}))\geq0)- \frac{1}{1+\exp(-S(\bm{\phi},Q(\mathbf{Z}))/\Delta)}\right\vert d\mathbf{Z} \\
		+\left|\frac{C_f-C_\Delta}{C_f}\right|\int |T(\mathbf{Z})| f_{\Delta}(\mathbf{Z})d\mathbf{Z}.
			\end{array}
		\end{split}
	\end{equation}
	Under Assumption \ref{ass1},
	\begin{equation}
		\left\vert \mathbbm{1}(S(\bm{\phi},Q(\mathbf{Z}))\geq0)- \frac{1}{1+\exp(-S(\bm{\phi},Q(\mathbf{\mathbf{Z}}))/\Delta)}\right\vert\leq K.
	\end{equation}
	The first term in Equation~(\ref{eq:expectationinequality2}) therefore goes to zero as $\Delta \rightarrow 0$ by the monotone convergence theorem. In the second term of Equation~(\ref{eq:expectationinequality2}),
	\begin{equation}
		|C_f-C_\Delta|\leq \int f_Z(\mathbf{Z})\left\vert \mathbbm{1}(S(\bm{\phi},Q(\mathbf{Z}))\geq0)- \frac{1}{1+\exp(-S(\bm{\phi},Q(\mathbf{Z}))/\Delta)}\right\vert d\mathbf{Z},
	\end{equation}
	which similarly goes to zero. 
\end{proof}

\subsection{Proof of Proposition~\ref{prop:bounded_weights}}

\begin{proof}
If $Q(\mathbf{Z})\notin \mathcal Q(\bm{\phi}\mid S)$, then $\mathbbm{1}\left(Q(\mathbf{Z})\in \mathcal Q(\bm{\phi}\mid S)\right)=0$ and $\tilde w_{\Delta}(\mathbf{Z})=0$, so the lower bound on $\tilde w_{\Delta}(\mathbf{Z})$ holds trivially.

Now suppose $Q(\mathbf{Z})\in \mathcal Q(\bm{\phi}\mid S)$. By the definition of the identified set $\mathcal Q(\bm{\phi}\mid S)$, each inequality restriction is satisfied, i.e.
$S^{(\ell)}(\bm{\phi},Q(\mathbf{Z}))\ge 0$ for all $\ell=1,\ldots,s$. We proceed under the maintained assumption that $\Lambda(x,\Delta)$ is the logistic function (\ref{eq:logistic}). When $x\geq 0$, we have $\exp(-x/\Delta)\in(0,1]$, which implies
\[
\Lambda(x,\Delta)\in\left[\frac{1}{2},\,1\right).
\]
Therefore, for each $\ell$,
\[
\Lambda\!\left(S^{(\ell)}(\bm{\phi},Q(\mathbf{Z})),\Delta\right)\ \ge\ \frac{1}{2},
\]
and hence
\[
\prod_{\ell=1}^s \Lambda\!\left(S^{(\ell)}(\bm{\phi},Q(\mathbf{Z})),\Delta\right)
\ \ge\ \left(\frac{1}{2}\right)^s.
\]
Using that $\mathbbm{1}\left(Q(\mathbf{Z})\in \mathcal Q(\bm{\phi}\mid S)\right)=1$, we obtain 
\[
\tilde w_{\Delta}(\mathbf{Z})
=
\frac{1}{\prod_{\ell=1}^s \Lambda\!\left(S^{(\ell)}(\bm{\phi},Q(\mathbf{Z})),\Delta\right)}
\ \le\
\frac{1}{(1/2)^s}
=
2^s.
\]
Combining the two cases yields $0\leq \tilde w_{\Delta}(\mathbf{Z})\le 2^s$ for all $\mathbf{Z}$. 
\end{proof}

\newpage
\putbib
\end{bibunit}

\clearpage

\makeatletter
\gdef\@thanks{}
\renewcommand{\thefootnote}{\fnsymbol{footnote}} 
\setcounter{footnote}{1}                          
\makeatother

  \title{Supplemental Appendix for `Fast Posterior Sampling in Tightly Identified SVARs Using `Soft' Sign Restrictions'}
  \author{Matthew Read\thanks{Economic Research Department, Reserve Bank of Australia. email: readm@rba.gov.au}\\
Dan Zhu\thanks{Department of Econometrics and Business Statistics, Monash University. email: dan.zhu@monash.edu.au}} 
  \date{\today}               
  \maketitle

\renewcommand{\thefootnote}{\arabic{footnote}}
\setcounter{footnote}{1}

\begin{bibunit}

This Supplemental Appendix contains additional material related to the bivariate model and empirical applications. Appendix~\ref{app:bivariate} derives identified sets in the bivariate model of Section~4. Appendix~\ref{app:empirical} contains information related to the oil market model of Section~5, including additional results and details about how we use our sampler to implement the prior-robust Bayesian approach to inference. Appendix~\ref{app:monetarypolicy} contains material related to the model of US monetary policy, which is briefly discussed in Section~5.5 of the main text.

\renewcommand{\theequation}{S1-\arabic{equation}}
\renewcommand{\thesection}{S1}
\renewcommand{\thefigure}{S1-\arabic{figure}}
\setcounter{equation}{0}
\setcounter{figure}{0}
\section{Bivariate Model -- Additional Details}
\label{app:bivariate}

Let $\mathbf{y}_{t} = (p_{t},q_{t})'$ contain log price and quantity. Assume $\mathbf{y}_t$ follows the SVAR($0$) $\mathbf{y}_t = \mathbf{A}_0^{-1}\bm{\varepsilon}_t = \bm{\Sigma}_{tr}\mathbf{Q}\mathbf{u}_t$ and let $\bm{\phi}=\mathrm{vech}(\bm{\Sigma}_{tr}) = (\sigma_{11},\sigma_{21},\sigma_{22})'$. The space of $2\times 2$ orthonormal matrices $\mathcal{O}(2)$ can be represented as
\begin{equation}
	\mathcal{O}(2) = \left\{
	\begin{bmatrix}
		\cos\theta & -\sin\theta \\
		\sin\theta & \cos\theta 
	\end{bmatrix}
	\right\} \cup
	\left\{
	\begin{bmatrix}
		\cos\theta & \sin\theta \\
		\sin\theta & -\cos\theta 
	\end{bmatrix}
	\right\},
\end{equation}
where $\theta \in [-\pi,\pi]$. The first set is the set of `rotation' matrices and the second is the set of `reflection' matrices.

\subsection{Connected identified set}

Consider imposing the following sign restrictions: 
\begin{equation}
	\mathbf{A}_{0}^{-1} = 
	\begin{bmatrix}
		+ & + \\
		- & +
	\end{bmatrix}.
\end{equation}
These restrictions require $\mathbf{e}_{1,2}'\mathbf{A}_{0}^{-1}\mathbf{e}_{1,2} = \mathbf{e}_{1,2}'\bm{\Sigma}_{tr}\mathbf{q}_{1} \geq 0$, $\mathbf{e}_{2,2}'\mathbf{A}_{0}^{-1}\mathbf{e}_{1,2} = \mathbf{e}_{2,2}'\bm{\Sigma}_{tr}\mathbf{q}_{1} \leq 0$, $\mathbf{e}_{1,2}'\mathbf{A}_{0}^{-1}\mathbf{e}_{2,2} = \mathbf{e}_{1,2}'\bm{\Sigma}_{tr}\mathbf{q}_{2} \geq 0$ and $\mathbf{e}_{2,2}'\mathbf{A}_{0}^{-1}\mathbf{e}_{2,2} = \mathbf{e}_{2,2}'\bm{\Sigma}_{tr}\mathbf{q}_{2} \geq 0$.\footnote{Sign normalisation restrictions on $\mathrm{diag}(\mathbf{A}_0)$ are redundant given these sign restrictions.} Assuming that $\sigma_{21} < 0$, it can be shown that the identified set for $\theta$ under these sign restrictions is:\footnote{For derivations, see \cite{Baumeister_Hamilton_2015} or \cite{Read_2022_rdp}.}
\begin{equation}\label{eq:is_sign}
	IS_{\theta}(\bm{\phi}\mid S) = \left[\arctan\left(\frac{\sigma_{22}}{\sigma_{21}}\right),0\right].
\end{equation}
Consider augmenting the sign restrictions with an upper bound on the price elasticity of supply $ \omega(\bm{\phi},\mathbf{Q}) \leq \bar{\omega}$ with $\bar{\omega} \geq 0$. The price elasticity of supply in this model is equivalent to the impulse response of $q_t$ to a demand shock that raises $p_t$ by one unit:\footnote{The price elasticity of supply can equivalently be obtained as the coefficient on $p_t$ in the first structural equation after normalising the coefficient on $q_t$ to unity:
\begin{equation*}
    \omega(\bm{\phi},\mathbf{Q}) = -\mathbf{e}_{1,2}'\mathbf{A}_{0}\mathbf{e}_{1,2}/\mathbf{e}_{1,2}'\mathbf{A}_{0}\mathbf{e}_{2,2} = -(\bm{\Sigma}_{tr}^{-1}\mathbf{e}_{1,2})'\mathbf{q}_1/(\bm{\Sigma}_{tr}^{-1}\mathbf{e}_{2,2})'\mathbf{q}_1.
\end{equation*}}
\begin{equation}\label{eq:elasticityrestriction}
    \omega(\bm{\phi},\mathbf{Q}) = \mathbf{e}_{2,2}'\mathbf{A}_{0}^{-1}\mathbf{e}_{2,2}/\mathbf{e}_{1,2}\mathbf{A}_{0}^{-1}\mathbf{e}_{2,2} = \mathbf{e}_{2,2}'\bm{\Sigma}_{tr}\mathbf{q}_{2}/\mathbf{e}_{1,2}'\bm{\Sigma}_{tr}\mathbf{q}_{2}.
\end{equation}
Since $\mathbf{e}_{1,2}\mathbf{A}_{0}^{-1}\mathbf{e}_{2,2} \geq 0$, the elasticity restriction is equivalent to
\begin{equation}
    \mathbf{e}_{2,2}'\bm{\Sigma}_{tr}\mathbf{q}_{2} \leq \bar{\omega}\mathbf{e}_{1,2}'\bm{\Sigma}_{tr}\mathbf{q}_{2}.
\end{equation}
It can be shown directly that this restriction implies that
\begin{equation}
    \theta \leq \mathrm{arccot}\left(\frac{\sigma_{21}}{\sigma_{22}}-\frac{\sigma_{11}}{\sigma_{22}}\bar{\omega}\right).
\end{equation}
Intersecting this inequality with the identified set in (\ref{eq:is_sign}) yields:
\begin{equation}
	IS_{\theta}(\bm{\phi}\mid S) = \left[\arctan\left(\frac{\sigma_{22}}{\sigma_{21}}\right),\mathrm{arccot}\left(\frac{\sigma_{21}}{\sigma_{22}}-\frac{\sigma_{11}}{\sigma_{22}}\bar{\omega}\right)\right].
\end{equation}

\subsection{Disconnected identified set}

Consider imposing the restriction that $\eta_{120}(\bm{\phi},\mathbf{Q}) =  \mathbf{e}_{1,2}'\bm{\Sigma}_{tr}\mathbf{q}_{2} \geq \lambda$ for $0 \leq \lambda \leq \sigma_{11}$ (when $\lambda > \sigma_{11}$, $IS_{\theta}(\bm{\phi}\mid S) = \{\emptyset\}$ at any value of $\bm{\phi}$). All other impulse responses are unrestricted and we impose the sign normalisation $\mathrm{diag}(\mathbf{A}_0)=\mathrm{diag}(\mathbf{Q}'\bm{\Sigma}_{tr}^{-1}) \geq \mathbf{0}_{n\times 1}$.\footnote{When $\lambda = 0$, this example coincides with Example~B.5 in \citet{Giacomini_Kitagawa_2021sup}.} 

Under these restrictions, the identified set for $\theta$ is characterised by
\begin{multline}
    \left\{\theta: \sigma_{22}\cos\theta \geq \sigma_{21}\sin\theta, \sigma_{11}\cos\theta \geq 0,-\sigma_{11}\sin\theta \geq \lambda \right\} \\
    \cup \left\{\theta : \sigma_{22}\cos\theta \geq \sigma_{21}\sin\theta, \sigma_{11}\cos\theta \leq 0, \sigma_{11}\sin\theta \geq \lambda\right\}.
\end{multline}
Assuming $\sigma_{21} < 0$, this set is equivalent to
\begin{multline}
    \left\{\theta: \tan\theta \geq \sigma_{22}/\sigma_{21}, \cos\theta > 0, \sin\theta \leq -\lambda/\sigma_{11}\right\} \\
    \cup \left\{\theta : \tan\theta \leq \sigma_{22}/\sigma_{21}, \cos\theta < 0, \sin\theta \geq \lambda/\sigma_{11}\right\} \cup \{\pi/2\}.
\end{multline}
The first set implicitly restricts $\theta$ to lie in $(-\pi/2,0]$, where $\sin(\theta)$ is increasing. The first set is therefore equivalent to $[\arctan(\sigma_{22}/\sigma_{21}),\arcsin(-\lambda/\sigma_{11})]$. The second set implicitly restricts $\theta$ to lie in $[\pi/2,\pi]$, where $\sin(\theta)$ is decreasing. For $\theta \in [\pi/2,\pi]$, $\tan\theta \leq \sigma_{22}/\sigma_{21}$ if $\theta \leq \pi + \arctan(\sigma_{22}/\sigma_{21})$ and $\sin\theta \geq \lambda/\sigma_{11}$ if $\theta \leq \pi - \arcsin(\lambda/\sigma_{11})$. The second set (in union with $\{\pi/2\}$) is therefore equivalent to $[\pi/2,\min\{\pi-\arcsin(\lambda/\sigma_{11}),\pi+\arctan(\sigma_{22}/\sigma_{21})\}$. Putting this together,
\begin{equation}
	\begin{split}
		\begin{array} {rcl}
	IS_{\theta}(\bm{\phi}\mid S) = \left[\arctan\left(\frac{\sigma_{22}}{\sigma_{21}}\right),\arcsin\left(-\frac{\lambda}{\sigma_{11}}\right)\right] \cup \\
	\left[\frac{\pi}{2}, \min\left\{\pi - \arcsin\left(\frac{\lambda}{\sigma_{11}}\right),\pi + \arctan\left(\frac{\sigma_{22}}{\sigma_{21}}\right)\right\} \right],
		\end{array}
	\end{split}
\end{equation}
which is the union of two disconnected intervals.\footnote{This expression assumes $\lambda/\sigma_{11} \leq \sigma_{22}/\sqrt{\sigma_{22}^2 + \sigma_{21}^2}$, otherwise the first interval is empty.}

\renewcommand{\theequation}{S2-\arabic{equation}}
\renewcommand{\thesection}{S2}
\renewcommand{\thefigure}{S2-\arabic{figure}}
\renewcommand{\thetable}{S2-\arabic{figure}}
\setcounter{equation}{0}
\setcounter{figure}{0}
\section{Oil Market Model -- Additional Details}
\label{app:empirical}

\subsection{Comparisons of Posterior Approximations}
\label{app:posteriorapproximations}

Section~5.2 of the main text compares the results obtained using the accept-reject sampler against those from our sampler with $\Delta = 10^{-5}$. Figure~\ref{fig:impulseresponses_Deltas} shows that the approximation of the posterior, as summarised by posterior medians and 68~per cent credible intervals, is substantively unchanged under alternative choices of $\Delta \in \{10^{-4},10^{-6}\}$.

\begin{figure}[h]
	\centering
	\caption{Impulse Responses to Oil Market Shocks -- Comparison of Results with Different $\Delta$}
	\label{fig:impulseresponses_Deltas}	
		\includegraphics[scale=1]{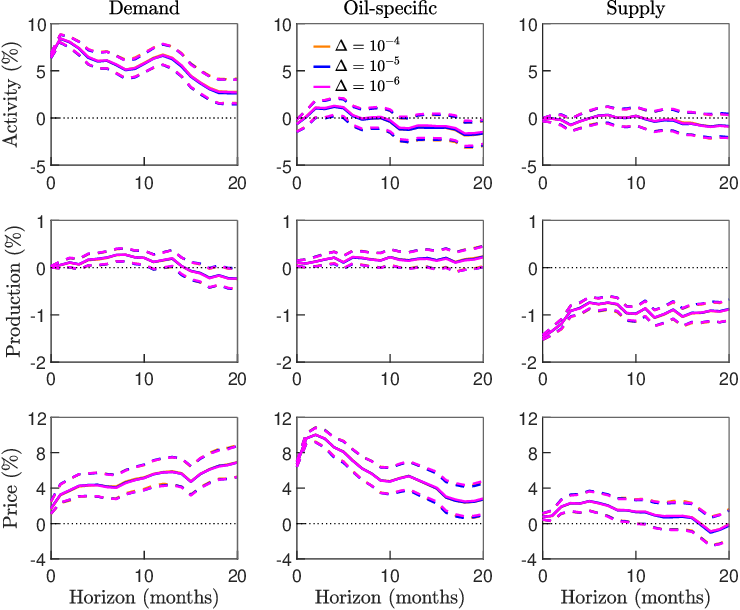}  
    \noindent\begin{minipage}{\textwidth}
	\small\textbf{Notes:} Solid lines are posterior medians and dashed lines are 68~per cent credible intervals.
    \end{minipage}
\end{figure}

\subsection{Approximating identified sets}
\label{app:identifiedsets}

This section provides additional details about how we approximate identified sets when implementing the prior-robust Bayesian procedure from \cite{Giacomini_Kitagawa_2021} (Section~5.3 in the main text).

According to the upper bound in Theorem~3 of \citet{Montiel-Olea_Nesbit_2021}, the number of draws $M$ required from inside the identified set to guarantee a misclassification error less than $\epsilon$ occurs with probability at least $1-\delta$ is 
\begin{equation}
\min \left\{2 d \ln (2d/\delta), \exp(1)(2d + \ln(1/\delta))\right\}/\epsilon,
\end{equation}
where $d$ is the dimension of the parameter region being approximated (i.e. the number of impulse responses). Setting $d = 3\times 3\times 21 = 189$ and $\epsilon = \delta = 0.05$ yields $M=20,713$. This many draws is also consistent with other combinations of $\delta$ and $\epsilon$. Following the recommendation in \citet{Montiel-Olea_Nesbit_2021}, Figure~\ref{fig:isodraw} plots the `iso-draw curve', which traces out combinations of $(\epsilon,\delta)$ that support the target value of $M$.

\begin{figure}[h]
	\centering
	\caption{Iso-draw Curve}
	\label{fig:isodraw}
		\includegraphics[scale=1]{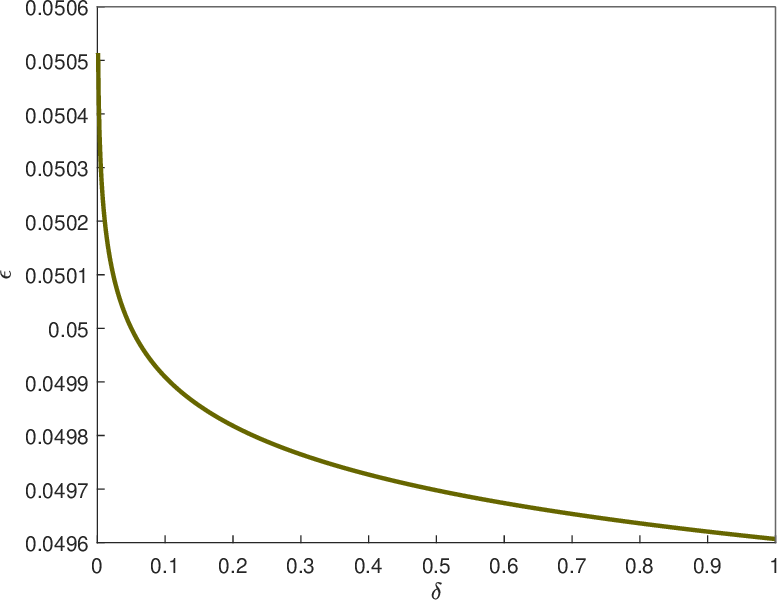}  
        \par\vspace{0.5em}
    \noindent\begin{minipage}{\textwidth}
	\small\textbf{Notes:} Iso-draw curve plots pairs $(\delta,\epsilon)$ such that $M=20,713$ draws from inside the identified set guarantees misclassification error less than $\epsilon$ with probability at least $1-\delta$.
    \end{minipage}
\end{figure}

To obtain $M$ effective draws from inside the identified set, we need to obtain more than $M$ draws from the smoothed target density because some draws will violate the identifying restrictions. We therefore gross up $M$ using the average effective sample size from the exercise in Section~5.2. In that exercise, the average effective sample size was 89~per cent, so obtaining approximately $M$ effective draws on average across the draws of $\bm{\phi}$ requires approximately $M/0.9 \approx 23,000$ draws of $\mathbf{Q}$ from the smoothed target density.

\subsection{Prior informativeness}
\label{app:priorinf}

Section 5.3.1 qualitatively discusses the role of the conditional prior for $\mathbf{Q}$ in driving posterior inferences about the effects of oil market shocks by comparing the results obtained under the standard and robust Bayesian approaches to inference. \citet{Giacomini_Kitagawa_2021} suggest quantifying the influence of the conditional prior in driving posterior inferences using the `prior informativeness statistic'. This is the amount by which the selection of a single conditional prior narrows the credible intervals relative to the robust credible intervals:

\begin{equation}
	\text{Prior informativeness} = 1 - \frac{\text{width of credible interval}}{\text{width of robust credible interval}}.
\end{equation}

When the parameter of interest is point identified, the credible and robust credible intervals coincide, the conditional prior has no influence on posterior inference, and the prior informativeness statistic is zero. Larger values indicate that the conditional prior contributes an increasing proportion of the information contained in the posterior.

\begin{table}[h]
	\caption{Prior Informativeness Statistic -- Impulse Responses to Oil Market Shocks} \label{tab:priorinf}
\begin{tabularx}{\textwidth}{lYYY}		\hline
	& \multicolumn{3}{c}{\textbf{Shock}} \\ \cline{2-4}
\textbf{Variable}~~~ & Demand & Oil-specific & Supply \\ \hline 
Activity	&	0.12	&	0.32	&	0.11	\\
Production	&	0.17	&	0.21	&	0.02	\\
Price	&	0.27	&	0.32	&	0.17	\\ 
\hline
	\end{tabularx}
    \par\vspace{0.5em}
\noindent\begin{minipage}{\textwidth}
\small\textbf{Notes:} Average of prior informativeness statistic over horizons. Higher values indicate that conditional prior has greater influence on posterior.
\end{minipage}
\end{table}

We compute this statistic for the impulse response of each variable to each shock at each horizon, then average it over horizons (Table~\ref{tab:priorinf}). Consistent with the discussion in Section 5.3, the prior informativeness statistic tends to be smaller for the responses to an oil supply shock than for the responses to other shocks. This indicates that the responses to an oil supply shock are more tightly identified, so the conditional prior contributes less of the apparent information in the posterior. The prior informativeness statistic averages less than 0.3 for almost all impulse responses, suggesting that the bulk of the information in the posterior is contributed by the data (given the identifying restrictions) rather than the conditional prior.

\renewcommand{\theequation}{S3-\arabic{equation}}
\renewcommand{\thesection}{S3}
\renewcommand{\thefigure}{S3-\arabic{figure}}
\setcounter{equation}{0}
\setcounter{figure}{0}
\section{Additional application: US monetary policy}
\label{app:monetarypolicy}

To explore whether the favourable performance of our approach persists in a larger model, we use the monetary SVAR from \citet{Antolin-Diaz_Rubio-Ramirez_2018} (henceforth, ARR18). They estimate the effects of monetary policy on the US economy using a six-variable SVAR in which the monetary policy shock is identified using an extensive set of sign restrictions on impulse responses and narrative restrictions.\footnote{See \citet{Giacomini_Kitagawa_Read_2023} for a robust Bayesian treatment of this application.}

The reduced-form VAR is the same as in \citet{Uhlig_2005}. The endogenous variables are real GDP ($GDP_t$), the GDP deflator ($GDPDEF_t$), a commodity price index ($COM_t$), total reserves ($TR_t$), non-borrowed reserves ($NBR_t$) (all in natural logarithms) and the federal funds rate ($FFR_t$); see \citet{Arias_Caldara_Rubio-Ramirez_2019} for details on the variables. The data are monthly and run from January 1965 to November 2007. The VAR includes a constant and 12~lags.

The sign restrictions on the impulse responses to a monetary policy shock follow \citet{Uhlig_2005}. The response of $FFR_{t+h}$ is non-negative for $h=0,1,\ldots,5$ and the responses of $GDPDEF_{t+h}$, $COM_{t+h}$ and $NBR_{t+h}$ are non-positive for $h=0,1,\ldots,5$.

We impose the extended set of narrative restrictions from Antol\'{i}n-D\'{i}az and Rubio-Ram\'{i}rez (2018)\nocite{Antolin-Diaz_Rubio-Ramirez_2018}. The restrictions are that the monetary policy shock was: positive in April 1974, October 1979, December~1988 and February 1994; negative in December 1990, October 1998, April 2001 and November 2002; and the most important contributor to the observed unexpected change in $FFR_t$ in these months. The implementation of the accept-reject sampler and our approach based on soft sign restrictions and the slice sampler is identical to that in the oil market application.

Obtaining the draws takes 21.8~hours using the accept-reject sampler, but only 1.1~hours using our approach. The average effective sample size from our sampler is 82.8~per cent, so our approach generates about 780,000~effective draws per hour, compared with around 45,000~effective draws per hour for the accept-reject sampler. Consistent with the results in Section~5.4 of the main text, our approach also more accurately determines whether the identified set is non-empty; the posterior plausibility of the identifying restrictions based on our sampler is 57.3~per cent compared with only 5.2~per cent when using the accept-reject sampler.

Figure~\ref{fig:impulseresponses_mp} summarises the posterior distributions of the impulse responses to a monetary policy shock obtained using the two samplers. The results are fairly similar, though the credible intervals based on the accept-reject sampler tend to be somewhat wider than those obtained using our sampler. This difference may reflect the ability of our sampler to better classify relatively small identified sets as non-empty. Figure~\ref{fig:posteriorcomp_mp} compares the approximation of the posterior distributions of the impact impulse responses obtained using the two samplers, conditional on the posterior mean of $\bm{\phi}$. The two approximations are again very similar. Overall, we conclude that our method continues to perform favourably in a larger SVAR.

\begin{figure}[h]
	\centering
	\caption{Impulse Responses to a Monetary Policy Shock -- Standard Bayesian Inference}
	\label{fig:impulseresponses_mp}
		\includegraphics[scale=1]{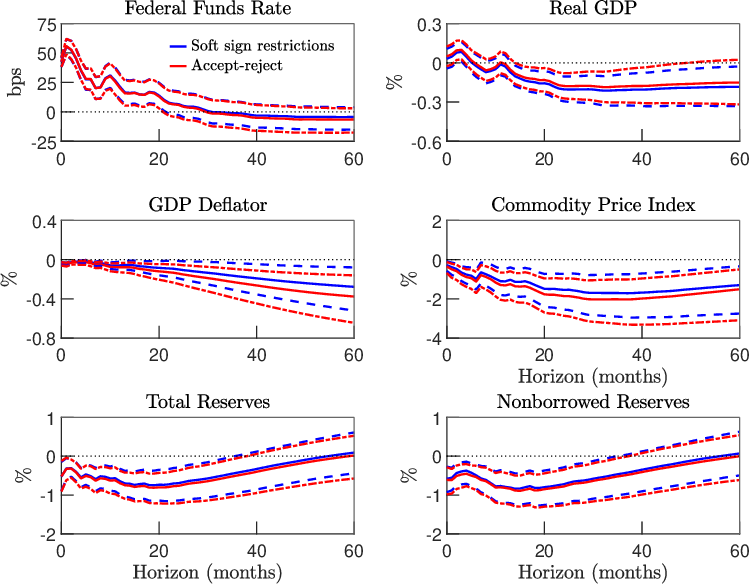}  
    \noindent\begin{minipage}{\textwidth}
	\small\textbf{Notes:} Solid lines are posterior medians and dashed lines are 68~per cent credible intervals.
    \end{minipage}
\end{figure}

\begin{figure}[h]
\centering
	\caption{Comparison of Posterior Approximations Across Samplers -- Impact Responses to a Monetary Policy Shock}
	\label{fig:posteriorcomp_mp}
	\begin{tabularx}{\textwidth}{X X X}
		\includegraphics[width=\linewidth]{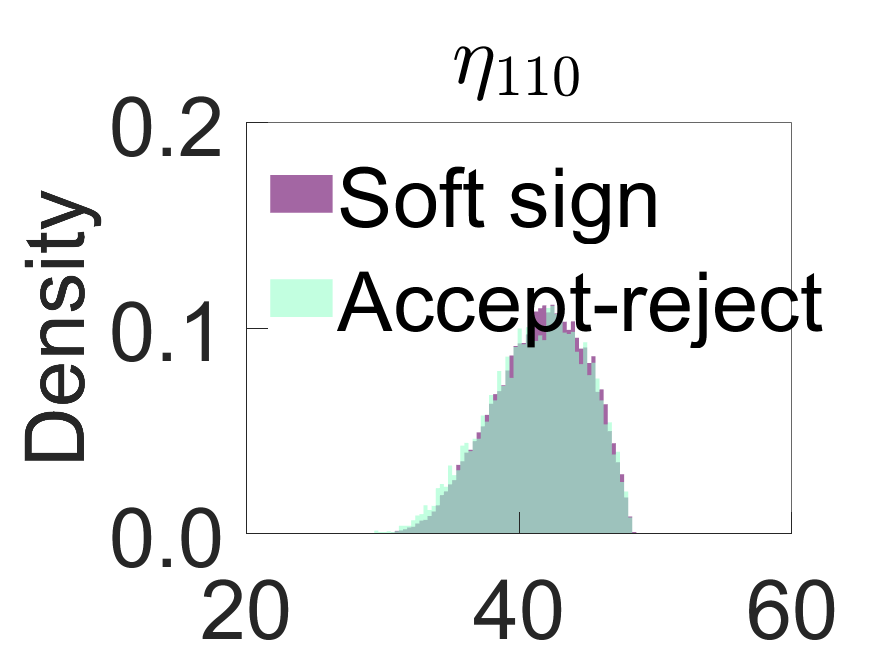}  &   \includegraphics[width=\linewidth]{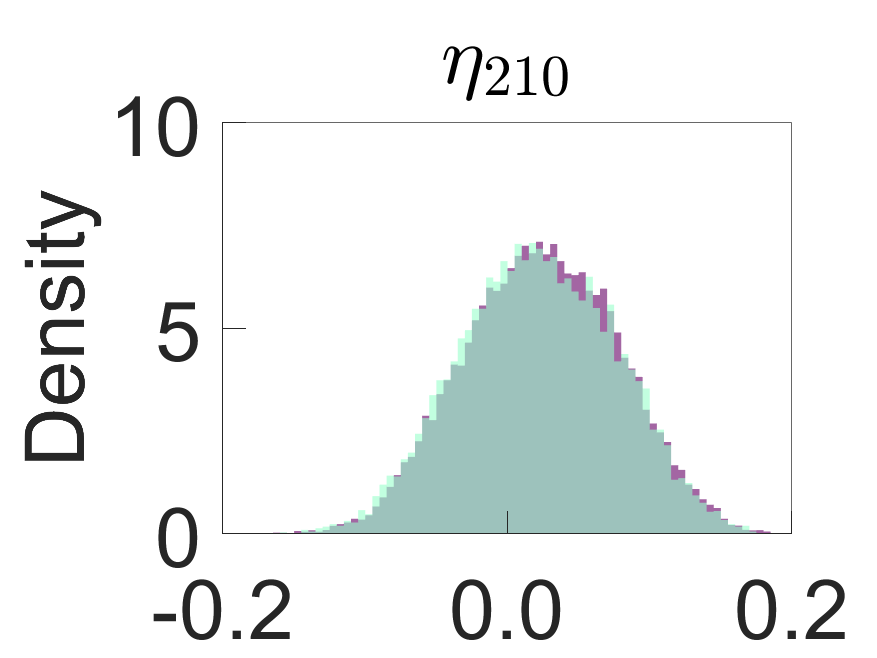}  & \includegraphics[width=\linewidth]{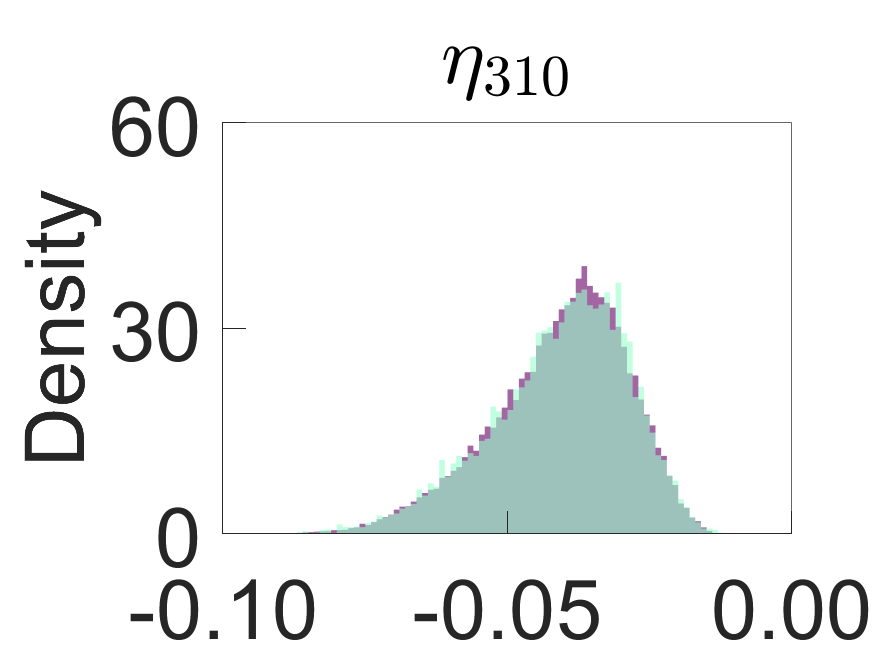}  \\
        \includegraphics[width=\linewidth]{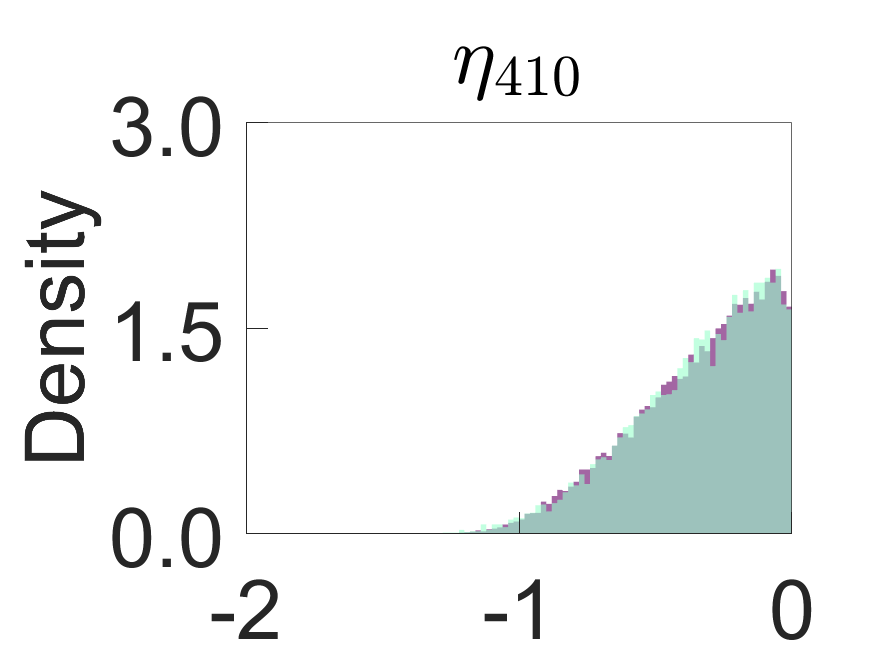}  &   \includegraphics[width=\linewidth]{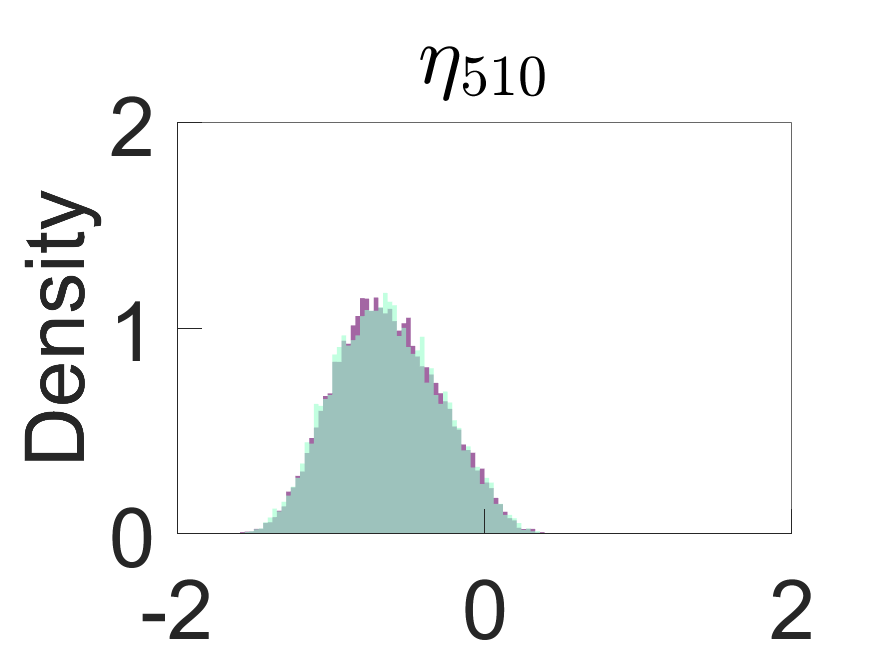}  & \includegraphics[width=\linewidth]{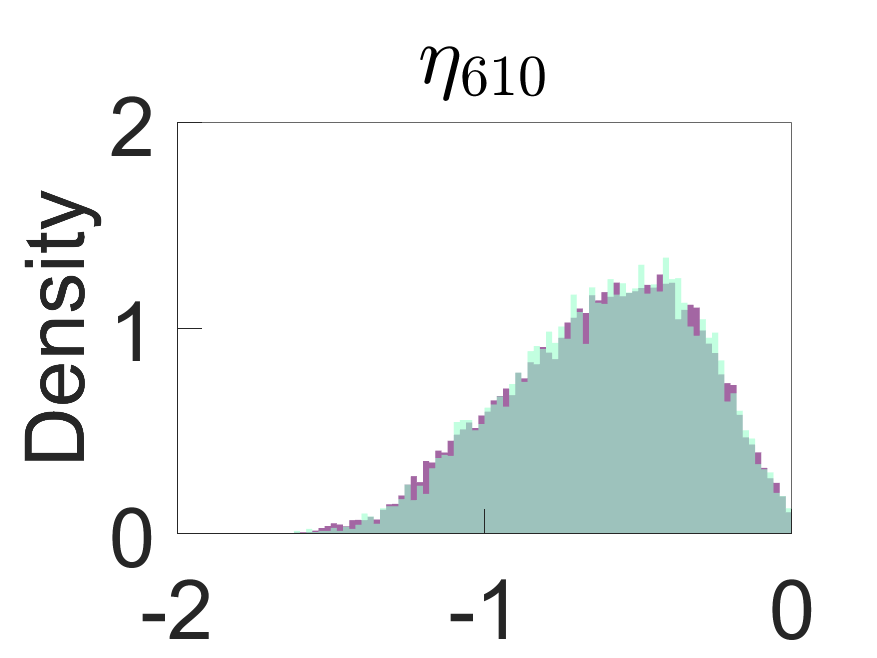}  \\
	\end{tabularx}
    \noindent\begin{minipage}{\textwidth}
	\small\textbf{Notes:} Based on 100,000~draws of $\mathbf{Q}$; conditional on posterior mean of $\bm{\phi}$.
     \end{minipage}  
 \end{figure}

\clearpage

\putbib
\end{bibunit}

\end{document}